\definecolor{highlightNEW}{named}{black}
\definecolor{mylilas}{RGB}{170,55,241}
\definecolor{mygreen}{RGB}{70,180,5}
\definecolor{myred}{RGB}{244,63,43}
\newcommand{\Citeauthornohref}[1]{{\protect\NoHyper\Citeauthor{#1}\protect\endNoHyper}}
\newtheorem{theorem}{Theorem}[section] 
\newtheorem{definition}[theorem]{Definition} 
\newtheorem{lemma}[theorem]{Lemma}
\newcommand{\doi}[1]{DOI~\href{\detokenize{http://dx.doi.org/#1}}{\detokenize{#1}}}
\newcommand{\zblnumber}[1]{Zbl~\href{\detokenize{https://zbmath.org/?q=an:#1}}{\detokenize{#1}}}
\newcommand{\mrnumber}[1]{\href{\detokenize{https://www.ams.org/mathscinet-getitem?mr=#1}}{\detokenize{MR#1}}}
\newcommand{\AAE}{\operatorname{AAE}}
\renewcommand{\AE}{\operatorname{AE}}
\newcommand{\ARE}{\operatorname{ARE}}
\newcommand{\RE}{\operatorname{RE}}
\newcommand{\B}{\mathcal{B}}
\renewcommand{\d}{\,\mathrm{d}}
\newcommand{\dx}{\,\mathrm{d}x}
\newcommand{\dv}{\,\mathrm{d}v}
\newcommand{\e}{\mathrm{e}}
\newcommand{\E}{\mathbb{E}}
\newcommand{\F}{\mathcal{F}}
\newcommand{\N}{\mathbb{N}}
\renewcommand{\P}{\mathbb{P}}
\newcommand{\R}{\mathbb{R}}
\newdimen\CdotAxis
\newcommand*{\CdotAux}[3]{%
  {%
    \settoheight\CdotAxis{$#2\vcenter{}$}%
    \sbox0{%
      \raisebox\CdotAxis{%
        \scalebox{#1}{%
          \raisebox{-\CdotAxis}{%
            $\mathsurround=0pt #2#3$%
          }%
        }%
      }%
    }%
    \dp0=0pt %
    \sbox2{$#2\bullet$}%
    \ifdim\ht2<\ht0 %
      \ht0=\ht2 %
    \fi
    \sbox2{$\mathsurround=0pt #2#3$}%
    \hbox to \wd2{\hss\usebox{0}\hss}%
  }%
}
\def\mathcolor#1#{\@mathcolor{#1}}
\def\@mathcolor#1#2#3{%
  \protect\leavevmode
  \begingroup
    \color#1{#2}#3%
  \endgroup
}
\let\oldalpha\alpha
\renewcommand{\alpha}{\mathcolor{highlightNEW}{\oldalpha}}
\newcommand{\ccode}[2]{\par
        \vspace*{8pt}
        {{\leftskip18pt\rightskip\leftskip
        \noindent{\it #1}\/: #2\par}}\par}
\newcommand{\keywords}[1]{\ccode{Keywords}{#1}}
\newcommand{\email}[1]{\href{mailto:#1}{#1}}
\def\received#1{Received~#1\par}
\def\revised#1{Revised~#1\par}
\DeclareSymbolFont{rsfs}{U}{rsfs}{m}{n}
\DeclareSymbolFontAlphabet{\mathscrsfs}{rsfs}
\newcommand{\jpTitle}{Solution of option pricing equations using orthogonal polynomial expansion}
\newcommand{\jpAuthors}{F. Baustian, K. Filipov\'{a} and J. Posp\'{\i}\v{s}il}
\newcommand{\jpKeywords}{orthogonal polynomial expansion; Hermite polynomials; Laguerre polynomials; Heston model; option pricing}
\newcommand{\jpMSC}{33C45; 65M60; 91G20; 91G60}% 
\newcommand{\jpJEL}{C58; G12; C63}% 
\newcommand{\jpDateReceived}{13 December 2019} % {30 September 2019}
\newcommand{\jpDateRevised}{27 February 2020, 23 June 2020}
\newcommand{\jpDate}{}%\today
\author[1]{Falko Baustian} % ORCID: ?
\author[2]{Kate\v{r}ina Filipov\'{a}} % ORCID: ?
\author[2]{Jan Posp\'{\i}\v{s}il\thanks{Corresponding author, \email{honik@kma.zcu.cz}}} % ORCID: 0000-0002-4288-1614
\affil[1]{Department of Mathematics, University of Rostock, Ulmenstra\ss e 69, 18057 Rostock, Germany}
\affil[2]{NTIS - New Technologies for the Information Society, Faculty of Applied Sciences, \authorcr University of West Bohemia, Univerzitn\'{\i} 2732/8, 301 00 Plze\v{n}, Czech Republic,\vspace*{3pt}}
\title{\textcolor{Navy}{\textsc{\jpTitle}}}
\date{\jpDate}
\begin{document}

\maketitle

\begin{center}
\received{\jpDateReceived}
\revised{\jpDateRevised}
\end{center}

\begin{abstract}
% ------------------------------------------------------------------------------ begin: abstract.tex
In this paper we study both analytic and numerical solutions of option pricing equations using systems of orthogonal polynomials.
Using a Galerkin-based method, we solve the parabolic partial differential equation for the Black-Scholes model using Hermite polynomials and for the Heston model using Hermite and Laguerre polynomials. 
We compare obtained solutions to existing semi-closed pricing formulas. 
Special attention is paid to the solution of Heston model at the boundary with vanishing volatility.

% ------------------------------------------------------------------------------ end: abstract.tex
\end{abstract}

\keywords{\jpKeywords}
\ccode{MSC classification}{\jpMSC}
\ccode{JEL classification}{\jpJEL}

\setcounter{tocdepth}{2}
\tableofcontents
\clearpage

% ------------------------------------------------------------------------------ begin: introduction.tex
\section{Introduction}\label{sec:introduction}

One of the fundamental tasks in financial mathematics is the pricing of derivatives, in particular option pricing. An \emph{option} is a contract between two parties which gives the holder the right (but not the obligation) to buy or sell the underlying asset under certain conditions on or before a specified future date. The price that is paid for the underlying when the option is exercised is called \emph{strike} price and the last day on which the option may be exercised is called expiration date or \emph{maturity} date. Whether the holder has the right to buy or sell the underlying asset depends on the type of option to which the contract is signed. There is either a \emph{call option} which allows the holder to buy the asset at a stated price within a specific time-frame or a \emph{put option} which allows the holder to sell the asset. In this article we will restrict ourselves to European options that can be exercised only on the expiration date.

In their Nobel-prize winning paper, \cite{BlackScholes73} proposed a model for evaluating the fair value of the European call option that gives the right to buy a single share of common stock and derived a semi-closed formula for the option price, the so-called Black-Scholes formula. For the model they have assumed a frictionless market with ideal conditions like the absence of arbitrage and the possibility to borrow and lend any amount of money and to buy and sell any amount of stock, respectively. Volatility in the Black-Scholes (BS) model is assumed to be constant which has later become its most discussed feature. Constant volatility matches poorly with the observed implied volatility surface for real market data. Especially for out of the money options the market prices are significantly higher than what the model suggests. This phenomenon is widely known as the volatility smile. For a better fit to the data, \cite{HullWhite87} proposed to model volatility as another stochastic process. There are various stochastic volatility models from \cite{HullWhite87}, \cite{SteinStein91}, \cite{Heston93}, and many others. Later on, additional jump components were included into the models, e.g. \cite{Bates96}. 

Up to this day, the Heston model is quite popular among economists and practitioners. \cite{Heston93} modelled the volatility using the mean-reverting \cite{CIR85} process (CIR), which allowed arbitrary correlation between volatility and spot asset returns. Heston also derived a semi-closed formula close to the BS formula. Both in BS and Heston model, one can derive the pricing partial differential equation (PDE) in several different ways, for example \citep{Wilmott98,Rouah13,Hull18} using arbitrage arguments with self-financing trading strategies, approaches with martingale measures or the Fokker-Planck equation for the transition probability density function. Although semi-closed formulas have been widely used in practice for a long time, only recently \cite{DanekPospisil20ijcm} showed that for certain values of model parameters these formulas can bring serious numerical difficulties especially in evaluation of the integrands in these formulas and their implementation therefore sometimes requires a demanding high precision arithmetic to be adopted.

Many different numerical methods can be used to solve option pricing problems such as Monte Carlo methods (including the Quasi Monte Carlo), Fourier based methods (including the Fast Fourier Transform method, Fourier method with Gauss-Laguerre quadrature, cosine series method), finite differences methods (with different time-stepping schemes, different grid refinements including the adaptive refinement or discontinuous Galerkin method), finite element methods (including the method with NURBS basis functions introduced by \cite{PospisilSvigler19ijcm}) or for example radial basis function methods (RBF). We refer the reader to the references in the BENCHOP project report written by \cite{Sydow15}, who implemented fifteen different numerical methods with the help of different advanced specialized techniques, and who compared all methods for different benchmark problems and consequently discussed advantages and disadvantages of each method.

The aim of this paper is to solve the pricing PDEs for both BS and Heston model using orthogonal polynomial expansions that are motivated by the Galerkin method. The expansion approach offers several advantages as we approximate the solution by smooth functions. Therefore, it gives more insight into how parameters influence prices and to what extent and hence give a better understanding of the solution than the semi-closed form or other approximation method especially for the Heston model. For the sake of clarity of the method we omit application of specialized techniques that could further improve the proposed method. Among the other mentioned methods, only FEM with smooth basis functions and RBF approximate the solution by smooth functions. One advantage of the orthogonal polynomial expansion is hence the independence of the space variable discretization (finite elements) or spacial node locations (RBF).

\cite{Aubin67} studied Galerkin type methods and their convergence for elliptic partial differential equations and \cite{Birkhoff68} used piecewise Hermite polynomials for this problem. Time-dependent equations were investigated with the usage of the Galerkin method by \cite{Swartz69}. The initial value problem for a general parabolic equation of second order was first studied by \cite{Douglas70}. They used Galerkin type methods, both continuous and discrete in time, and established a priori estimates to control the error. These articles initiated several other papers by \cite{Dupont72}, \cite{Fix72}, \cite{Wheeler73}, \cite{Bramble74}, \cite{Bramble77}, and \cite{Thomee77}. Most of the a priori estimates are formulated with regard to the $L^2$ norm but \cite{Bramble77} offers estimates for the maximum norm, as well. Nonlinear parabolic equations were covered by \cite{Wheeler73}. A survey of results can be found in \cite{Thomee78} and in the monograph \cite{Thomee06}.

The application of orthogonal polynomial expansions in option pricing was to our knowledge for the first time suggested by \cite{Jarrow82} who pioneered the use of Edgeworth expansions for valuation of derivative securities. Later \cite{Corrado96} introduced the Gram-Charlier expansions. In the recent past, Hermite polynomial expansion approaches have been used in some interesting articles regarding different aspects of the option pricing problem.

\cite{Xiu14} studied a closed-form series expansion of European call option prices in the time variable and this series expansion was derived using the Hermite polynomials. Xiu introduced two approaches on vanilla option and binary option. The first one has been a bottom-up Hermite polynomial approach and the second one has been a top-down lucky guess approach. As the benchmark model he has chosen BS model but stated that square-root (SQR) models for the volatility like \cite{Heston93}, quadratic volatility (QV) models, constant elasticity of variance (CEV) models, which introduces one additional parameter the elasticity of variance, or several jump-diffusion models can be considered, see for example a recent monograph by \cite{Lewis16}.

\cite{Heston17} showed that Edgeworth expansions for option valuation are equivalent to approximating the option payoff using Hermite polynomials and logistic polynomials. Consequently, the value of an option is equal to the value of an infinite series of replicating polynomials. Heston and Rossi provide efficient alternative moment-based formulas to express option values in terms of skewness, kurtosis and higher moments.

Polynomial expansions with Hermite and Laguerre polynomials play also a substantial role in \cite{Alziary18}. The authors rigorously formulate the Cauchy problem connected to the Heston model as a parabolic PDE with a special focus on the boundary conditions which are often neglected in the literature. \Citeauthornohref{Alziary18} provide the real analyticity of the solution which is directly connected to the problem of market completeness studied in \cite{Davis08}. The polynomial expansions are used in the proof of the main results of the article. Further investigations of the boundary conditions can be found in the forthcoming work \cite{Alziary20pre}.

Very recently, option pricing with orthogonal polynomial expansions has been studied by \cite{Ackerer20}, who derived option prices series representation by expansion of the characteristic functions rather than by solving the pricing PDE.

The structure of the paper is the following. In Section \ref{sec:preliminaries} we introduce system of orthogonal polynomials, studied models as well as other necessary terms and fundamental properties. In Section \ref{sec:methodology} we solve the Black-Scholes and Heston PDE using the orthogonal polynomial expansion. To solve the BS PDE we use Hermite polynomials and to solve the Heston PDE we use a combination of Hermite and Laguerre polynomials. In Section \ref{sec:results} we present all numerical results, especially comparison to the existing semi-closed form solutions. We conclude in Section \ref{sec:conclusion}.

% ------------------------------------------------------------------------------ end: introduction.tex
% ------------------------------------------------------------------------------ begin: preliminaries.tex
\section{Preliminaries and notation}\label{sec:preliminaries}

\subsection{Orthogonal polynomials}

Standard theory for parabolic PDEs requires initial data in a Lebesgue space. In the PDE pricing approach for European-type derivatives the initial value corresponds to the payoff function of the contract but unfortunately the payoff of many European options, e.g., the European call option, is unbounded and not Lebesgue-integrable. For this reason we consider weighted Lesbesgue spaces with a positive \emph{weight function} $w$ as studied in \cite{Kufner80}, \cite{Kufner87}, \cite{Funaro92}.

The \emph{weighted Lebesgue space} $L^2(\R, w\d x)$ is the space of all measurable functions $f$ for which
\begin{align*}
\Vert f \Vert_{w} := \left( \int_{\R} \vert f(x) \vert^2 w(x) \d x \right)^{1/2} < \infty.
\end{align*}
As usual, we consider representatives of classes of functions which are equal almost everywhere. We can also define weighted Sobolev spaces $H^k(\R, w\d x)$ for $k\geq1$. Again, we refer the reader to \cite{Kufner80}, \cite{Kufner87}, and \cite{Funaro92}, for details about such spaces.

We consider sequences $(F_n)$ of real polynomials in $L^2(\R,w\dx)$ which are pairwise orthogonal with respect to the \emph{inner product} defined by
\begin{equation}\label{def:inner}
\langle f, g\rangle_{w} := \int_{\R} f (x) g(x) w(x) \dx \quad\mbox{for}\ f,g\in L^2 (\R, w\d x).
\end{equation}
It can be shown that for given $F_0(x)$ and $F_1(x)$ that are not both identically zero, there exist functions $\alpha(n,x)$ and $\beta(n,x)$ such that the system of orthogonal polynomials satisfies the so called \emph{three-term recurrence relation}
\begin{equation}\label{e:recurrence_general}
F_{n+1}(x)=\alpha (n,x) F_n(x) + \beta (n,x) F_{n-1}(x),\quad n\in\N.
\end{equation}
The relation \eqref{e:recurrence_general} is arguably the single most important piece of information for the constructive and computational use of orthogonal polynomials. For more details about general systems of orthogonal polynomials and on the proof of the recurrence relation we refer the reader to the book by \cite{Gautschi04}.

Throughout the paper we will work especially with Hermite and Laguerre polynomials. Their properties are rather extensively mentioned in many monographs, we refer the readers for example to the books by 
\cite[chap. 22]{AbramowitzStegun64},  
\cite[chap. 4]{Lebedev65}, 
\cite[chap. 5]{Szego75},
\cite{Thangavelu93} 
and \cite[chap. 18]{Olver10} to name a few. The definition and basic properties of Hermite and Laguerre polynomials can be found in all of these monographs.

\subsubsection{Hermite polynomials}

Hermite polynomials are orthogonal polynomials on the real line. There exists two types of Hermite polynomials that differ slightly in the choice of weight function and that are called \emph{probabilists'} (weight function $\e^{-{x^2}/2}$) and \emph{physicists'} (weight function $\e^{-x^2}$) \emph{Hermite polynomials}. Those two types can be easily converted into each other and we will consider physicists' polynomials only.

\begin{definition}\label{def:Her} 
The \emph{system of Hermite polynomials} is defined by the Rodrigues formula
\begin{align*}%\label{Her:def}
H_m(x) &:= (-1)^m \e^{x^2} \frac{\d^m}{\d x^m} \e^{-x^2},\quad m\in\N_0.
\end{align*}
\end{definition}

The three-term recurrence \eqref{e:recurrence_general} for Hermite polynomials reads
\begin{equation}
\label{Her:recur}
H_{m+1}(x) = 2x H_m(x) - 2m H_{m-1}(x), \quad m \geq 1.
\end{equation}

The Hermite polynomials form a complete orthogonal system in the weighted Lebesgue space $L^2(\R,\e^{-x^2}\,\d x)$ with $\left\langle H_m,H_n\right\rangle_w=2^nn!\sqrt{\pi}\cdot\delta_{m,n}$, where $\delta_{m,n}$ is the Kronecker delta, as well as an orthogonal set in the weighted Sobolev space $H^k(\mathbb{R},\mathrm{e}^{-x^2}\,\mathrm{d}x)$ for $k\geq1$. See \cite[Sec. 4.14]{Lebedev65} for the orthogonality and \cite[Sec. 5.7]{Szego75} for the completeness of the system, respectively.

In the following lemma we state several useful simplifications of integral terms that are consequences of Definition \ref{def:Her} and the three-term recurrence \eqref{Her:recur}. 

\begin{lemma}\label{lem25}
For all $m, n \in \N_0$,
\begin{align}
\frac{1}{2^n n! \sqrt{\pi}} \int_{-\infty}^{\infty} H_m' (x) H_n' (x)\e^{-x^2} \d x &= 2m \delta_{m,n}, \label{IH1} \\
\frac{1}{2^n n! \sqrt{\pi}} \int_{-\infty}^{\infty} H_m (x) H_n' (x)\e^{-x^2} \d x &= \delta_{m+1,n}, \label{IH2} \\
\frac{1}{2^n n! \sqrt{\pi}} \int_{-\infty}^{\infty} x H_m' (x) H_n (x)\e^{-x^2} \d x &= 2(n+1)m \delta_{m,n+2} + m \delta_{m,n}, \label{IH4} \\
\frac{1}{2^n n! \sqrt{\pi}} \int_{-\infty}^{\infty} x H_m (x) H_n (x)\e^{-x^2} \d x &= \frac12 \delta_{m+1,n} + m \delta_{m-1,n}. \label{IH5}
\end{align}
\end{lemma}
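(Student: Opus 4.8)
The plan is to reduce each of the four integrals to the orthogonality relation $\langle H_m,H_n\rangle_w = 2^n n!\sqrt{\pi}\,\delta_{m,n}$ using two elementary consequences of Definition \ref{def:Her} and the recurrence \eqref{Her:recur}. First I would record the differentiation formula
\begin{equation*}
H_m'(x) = 2m\,H_{m-1}(x),\qquad m\in\N,
\end{equation*}
obtained by differentiating the Rodrigues formula to get $H_m'(x)=2xH_m(x)-H_{m+1}(x)$ and then eliminating $H_{m+1}$ via \eqref{Her:recur}. Second I would rewrite \eqref{Her:recur} itself in the form
\begin{equation*}
x H_m(x) = \tfrac12 H_{m+1}(x) + m\,H_{m-1}(x),\qquad m\in\N_0,
\end{equation*}
with the convention $H_{-1}\equiv0$ so that the formula is also valid at $m=0$. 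Substituting these two relations into each integrand turns it into a short linear combination of products $H_jH_k$, after which \eqref{def:inner} together with the orthogonality relation finishes the computation.

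Carrying this out: in \eqref{IH1} both derivatives are replaced, reducing the integral to $4mn\,\langle H_{m-1},H_{n-1}\rangle_w$, and normalizing by $2^n n!\sqrt{\pi}$ produces $2m\,\delta_{m,n}$; in \eqref{IH2} only $H_n'$ is expanded, giving $2n\,\langle H_m,H_{n-1}\rangle_w$ and hence $\delta_{m+1,n}$; in \eqref{IH4} one first writes $xH_m'(x)=2m\,xH_{m-1}(x)$ and then applies the displayed recurrence to $xH_{m-1}$, obtaining $xH_m'(x)=m\,H_m(x)+2m(m-1)\,H_{m-2}(x)$, so pairing with $H_n$ yields $m\,\delta_{m,n}$ from the first term and, after tracking the normalization on the support $m=n+2$, $2(n+1)m\,\delta_{m,n+2}$ from the second; in \eqref{IH5} one substitutes $xH_m=\tfrac12 H_{m+1}+mH_{m-1}$ directly and reads off $\tfrac12\delta_{m+1,n}+m\,\delta_{m-1,n}$ from orthogonality.

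I do not expect a genuine obstacle here — the content of the lemma is essentially a careful bookkeeping exercise with orthogonality — but the step needing the most care is tracking the index-dependent normalizing constant $2^k k!\sqrt{\pi}$ through each index shift; the coefficients in \eqref{IH1}, \eqref{IH2}, and \eqref{IH4} come precisely from ratios such as $2^{n-1}(n-1)!/(2^n n!)=1/(2n)$ and, evaluated where the relevant Kronecker delta is nonzero, $2^{m-2}(m-2)!/(2^n n!)=1$. One must also check the low-index cases $m=0,1$ and $n=0$, where the convention $H_{-1}\equiv0$ (equivalently, a vanishing prefactor $m$ or $n$) makes the would-be extra terms disappear, so that all four identities hold on all of $\N_0^2$ as stated.
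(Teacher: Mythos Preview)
Your proposal is correct and matches the approach the paper indicates: the text states that the identities are ``consequences of Definition~\ref{def:Her} and the three-term recurrence~\eqref{Her:recur}'' and defers the details to \cite[chap.~2, Lemmas 2.5--2.9]{Filipova19}, which is precisely what you do by combining $H_m'=2mH_{m-1}$ with $xH_m=\tfrac12 H_{m+1}+mH_{m-1}$ and reducing to orthogonality. Your bookkeeping of the normalization constants and the low-index conventions is accurate.
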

A proof can be found in the thesis \cite[chap. 2, Lemmas 2.5--2.9]{Filipova19}.

\subsubsection{Laguerre polynomials}

The volatility process in the Heston model is strictly positive provided that the Feller condition is satisfied. Hence, we need a system of orthogonal polynomials on the positive part of the real line for the expansion in the volatility variable. With the weight function $w\colon\R^+\to\R$, $w(v)=\e^{-v}$, on $\R^+=(0,\infty)$ such a system is given by the Laguerre polynomials.

\begin{definition}\label{def:Lag}
The \emph{system of Laguerre polynomials} is defined by
\begin{align*}%\label{Leg:def}
L_n(v) &:= \frac{\e^v}{n!} \frac{\d^n}{\d v^n}\left(\e^{-v}v^n\right), \quad  n\in\N_0.
\end{align*}
\end{definition}

The three-term recurrence \eqref{e:recurrence_general} for the Laguerre polynomials is
\begin{equation}
\label{Lag:recur}
L_{n+1}(v) = \frac1{n+1}\left[ (-v+2n+1)L_n(v) - nL_{n-1} (v) \right],\quad n \geq 1.
\end{equation}

The Laguerre polynomials form a complete orthonormal system in the weighted Lebesgue space $L^2(\R^{+},\e^{-v}\,\d v)$. The orthonormality of the system is studied in \cite[Sec. 4.21]{Lebedev65} and the completeness in \cite[Sec. 5.7]{Szego75}.

We use Definition \ref{def:Lag} and the three-term recurrence \eqref{Lag:recur} to obtain some simplifications.

\begin{lemma}\label{lem28}
For all $m, n \in \N_0$,
\begin{align}
\int_{0}^{\infty} v L_m (v) L_n (v) \e^{-v} \d v &= (2m+1) \delta_{m,n} - m \delta_{m-1,n} - (m+1) \delta_{m+1,n}, \label{IL1} \\
\int_{0}^{\infty} v L_m'(v) L_n(v) \e^{-v} \d v &= m (\delta_{m,n} - \delta_{m-1,n}), \label{IL2} \\
\int_{0}^{\infty} v L_m'(v) L_n'(v) \e^{-v} \d v &= m \delta_{m,n}, \label{IL3} \\
\int_{0}^{\infty} L_m'(v) L_n(v) \e^{-v} \d v &= -\sum_{a=0}^{m-1} \delta_{a,n}. \label{IL4} 
\end{align}
\end{lemma}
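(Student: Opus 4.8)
The plan is to reduce each of the four integrals to finitely many instances of the orthonormality relation $\langle L_m,L_n\rangle_w=\int_0^\infty L_m(v)L_n(v)\e^{-v}\d v=\delta_{m,n}$, by rewriting the ``extra'' factor ($vL_m$, $vL_m'$, or $L_m'$) in the Laguerre basis $(L_k)_{k\ge0}$. Three structural identities suffice:
\begin{align*}
vL_m(v) &= (2m+1)L_m(v)-mL_{m-1}(v)-(m+1)L_{m+1}(v),\\
vL_m'(v) &= mL_m(v)-mL_{m-1}(v),\\
L_m'(v) &= -\sum_{k=0}^{m-1}L_k(v).
\end{align*}
The first of these is simply the three-term recurrence \eqref{Lag:recur} solved for $vL_m$.

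For the second, I would start from Definition~\ref{def:Lag} written as $m!\,\e^{-v}L_m(v)=\frac{\d^m}{\d v^m}(\e^{-v}v^m)$, expand $\frac{\d^{m+1}}{\d v^{m+1}}(v\cdot\e^{-v}v^m)$ by the Leibniz rule (only two terms survive, since the second and higher derivatives of $v$ vanish), and compare with $(m+1)!\,\e^{-v}L_{m+1}(v)=\frac{\d^{m+1}}{\d v^{m+1}}(\e^{-v}v^{m+1})$; this yields $(m+1)L_{m+1}=vL_m'-vL_m+(m+1)L_m$, and eliminating $(m+1)L_{m+1}$ through \eqref{Lag:recur} makes the $v$-terms cancel, leaving $vL_m'=mL_m-mL_{m-1}$. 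The third identity follows from the second: subtracting the $vL_m'$ identity at indices $m$ and $m-1$ and using \eqref{Lag:recur} once more gives the telescoping relation $L_m'-L_{m-1}'=-L_{m-1}$, which sums (with $L_0'\equiv0$) to the stated formula.

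With these in hand the four identities are immediate. Inserting the first identity into \eqref{IL1} and integrating termwise against $L_n\e^{-v}$ produces exactly the three Kronecker deltas. Inserting $vL_m'=mL_m-mL_{m-1}$ into \eqref{IL2} gives $m(\delta_{m,n}-\delta_{m-1,n})$. For \eqref{IL3} I would use $vL_m'=mL_m-mL_{m-1}$ in one factor and $L_n'=-\sum_{b=0}^{n-1}L_b$ in the other, so that the integral equals $-m\sum_{b=0}^{n-1}(\delta_{m,b}-\delta_{m-1,b})$; a short case analysis on the positions of $m$ and $m-1$ relative to $\{0,\dots,n-1\}$ (together with the trivial case $m=0$) shows the two truncated sums coincide unless $m=n$, leaving $m\delta_{m,n}$. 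Finally, inserting $L_m'=-\sum_{a=0}^{m-1}L_a$ into \eqref{IL4} and using orthonormality gives $-\sum_{a=0}^{m-1}\delta_{a,n}$ at once.

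The only step that is not pure bookkeeping is establishing the two derivative identities $vL_m'=mL_m-mL_{m-1}$ and $L_m'=-\sum_{k<m}L_k$ from Definition~\ref{def:Lag} and \eqref{Lag:recur} alone, without circularity; the Leibniz computation and the ensuing telescoping are where attention is needed, and in \eqref{IL3} one must be careful with the boundary cases of the truncated delta-sums. Once the three structural identities are available, the rest is routine manipulation of the Kronecker symbol. (Alternatively, \eqref{IL3} can be obtained in a single integration by parts from the self-adjoint form $(v\e^{-v}L_n')'=-n\e^{-v}L_n$ of the Laguerre equation, giving $n\delta_{m,n}$, but that uses the differential equation rather than only the recurrence.)
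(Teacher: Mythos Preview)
Your proposal is correct and follows precisely the approach the paper indicates, namely deriving the identities from Definition~\ref{def:Lag} and the three-term recurrence~\eqref{Lag:recur}; the paper itself defers all details to the thesis~\cite{Filipova19}, so your write-up in fact supplies what the paper omits. The derivations of the three structural identities and the case analysis for~\eqref{IL3} are all sound.
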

A proof can be found in the thesis \cite[chap. 2, Lemmas 2.12--2.15]{Filipova19}. 
It is worth mentioning that the formulas in Lemma \ref{lem25} and Lemma \ref{lem28} are not stated in any of the monographs listed above.

\subsubsection{Finite--dimensional projections}

In the following, we study orthogonal projections of functions in weighted Lebesgue spaces into finite--dimensional subspaces spanned by Hermite and Laguerre polynomials. See \cite{Funaro92} for details of the projection operators.

At first, we consider the weight function $w(x)=\e^{-x^2}$ on the real line $\R$ and denote by $S_M^H$ the vector space spanned by the first $M+1$ Hermite polynomials. The orthogonal projector $\Pi_M^H\colon L^2(\R, w\dx)\to S_M^H$ with
\begin{equation}\label{approx}
\Pi^H_M f=\sum_{i=0}^{M}\frac{\left\langle f,H_i\right\rangle_w}{\left\|H_i\right\|_w^2}H_i=\sum_{i=0}^{M}\frac{\left\langle f,H_i\right\rangle_w}{2^ii!\sqrt{\pi}}H_i \quad\mbox{for }f\in L^2(\R, w\dx)
\end{equation}
satisfies
$$
\left\|f-\Pi^H_M f\right\|_w=\inf_{\phi\in S^H_M}\left\|f-\phi\right\|_w
\quad\mbox{and}\quad
\lim_{M\to\infty}\left\|f-\Pi^H_Mf\right\|_w=0
$$
for every $f\in L^2(\R, w\dx)$. Moreover, for each $k\in\N_0$ there exists a constant $C=C(k)>0$ such that
\begin{equation}\label{eq:H_est}
\left\|f-\Pi^H_M f\right\|_w\leq CM^{-k/2}\left\|\frac{\mathrm{d}^kf}{\mathrm{d}x^k}\right\|_w\quad\mbox{for all }M>k
\end{equation}
and for every $f\in H^k(\R,w\dx)$, see \cite[Theorem 6.2.6]{Funaro92}. We will later use the orthogonal projector $\Pi^H_M$ defined in \eqref{approx} to study the Black-Scholes model.

Next, we consider the weight function $w(v)=\e^{-v}$ on $\R^{+}$ and denote by $S_N^L$ the vector space spanned by the first $N+1$ Laguerre polynomials. The orthogonal projector $\Pi_N^L\colon L^2(\R^{+}, w\dv)\to S_N^L$ with
$$
\Pi^L_N f=\sum_{j=0}^{N}\left\langle f,L_j\right\rangle_wL_j \quad\mbox{for }f\in L^2(\R^{+}, w\dv)
$$
satisfies the same approximation properties as $\Pi^H_M$ and for each $k\in\N_0$ we have
\begin{equation}\label{eq:L_est}
\left\|f-\Pi^L_N f\right\|_w\leq CN^{-k/2}\left\|x^{k/2}\frac{\mathrm{d}^kf}{\mathrm{d}x^k}\right\|_w\quad\mbox{for all }N>k
\end{equation}
for every $f$ with $\frac{\mathrm{d}^mf}{\mathrm{d}x^m}x^{m/2}\in L^2(\R^{+},w\dv)$, $0\leq m\leq k$, and a constant $C=C(k)>0$, see \cite[Theorem 6.2.5]{Funaro92}.

To treat models with non-constant volatility, we use a weighted Lebesgue space in two variables. A \emph{weighted Lebesgue space} $L^2 (\R\times\R^{+}, w\d x\d v)$ with the \emph{weight function} $w: \R\times\R^+ \rightarrow (0,\infty)$ is the space of all measurable functions $g$ for which
\begin{align*}
\Vert g \Vert_{w} := \left( \int_{\R}\int_{\R^+} \vert g(x,v) \vert^2 w(x,v) \d v \d x\right)^{1/2} < \infty.
\end{align*}

The inner product is defined in accordance with \eqref{def:inner}.
For the Heston model, we will consider the weighted Lebesgue space $L^2(\R\times\R^+, w\dx\dv)$ with the weight $w(x,v)=\e^{-x^2-v}$. Due to \cite[Sec. II.4]{ReedSimon80}, the products of Hermite and Laguerre polynomials $P_{i,j}(x,v)=H_i (x) L_j (v)$ for $i,j\in\N_0$ with $\left\langle P_{k,l},P_{i,j}\right\rangle_w = 2^ii!\sqrt{\pi}\cdot\delta_{k,i}\cdot\delta_{l,j}$ for $k,l,i,j\in\N_0$ form a complete orthogonal set in $L^2 (\R\times\R^+, w\dx\dv)$. Let $S_{M,N}$ denote the vector space spanned by the products of the first $M+1$ Hermite polynomials and the first $N+1$ Laguerre polynomials. The orthogonal projector $\Pi_{M,N}\colon L^2(\R\times\R^+, w\dx\dv)\to S_{M,N}$ defined by
\begin{equation}\label{approx2}
\Pi_{M,N}f = \sum_{i=0}^M\sum_{j=0}^N\frac{\left\langle f,P_{i,j}\right\rangle_w}{2^ii!\sqrt{\pi}}P_{i,j}\quad\mbox{for }f\in L^2(\R\times\R^+, w\dx\dv)
\end{equation}
inherits the approximation properties from the projection operators $\Pi^H_M$ and $\Pi^L_N$.

For practical reasons, we have to evaluate the finite-dimensional projections \eqref{approx} and \eqref{approx2} numerically, where the Clenshaw's algorithm 
\citep[Sec. 5.4]{Press07}
will be of use. To evaluate the Fourier coefficients

\begin{equation}\label{eq:fc}
(i)\quad c_i = \frac{\langle f,H_i\rangle_{w}}{2^i i! \sqrt{\pi}}
\qquad\mbox{and}\qquad
(ii)\quad c_{i,j} = \frac{\langle f, P_{i,j}\rangle_{w}}{2^i i! \sqrt{\pi}}
\end{equation}
in \eqref{approx} and \eqref{approx2}
precisely, it is necessary to choose the appropriate quadrature. Here we consider the Gauss--Hermite and Gauss--Laguerre quadratures, see for example in the books
\cite[Sec. 25.4]{AbramowitzStegun64},
\cite[Sec. 14.5 -- 14.7]{Szego75},
\cite[Sec. 3.5]{Olver10} or
\citep[Sec. 4.6]{Press07}.

\subsection{Option pricing models} 
\label{sec23:models}

Since options are frequently traded contracts, the derivation of the option prices is an important task in mathematical finance. There exist several models for option pricing in an arbitrage-free setting. The prices that can be provided by these models give us an idea how the real market prices should behave. We will consider option pricing in the classical models by \cite{BlackScholes73} with constant volatility and by \cite{Heston93} with a mean--reverting stochastic volatility process. 

In this article, we restrict ourselves to the pricing of European call options, since the price of the corresponding European put options can be obtained by the put--call parity. A European option contract is characterized by two parameters, maturity $T$ and strike price $K$. We introduce also a variable $\gamma>0$ that is sometimes called \emph{moneyness} and that measures a relative position of the price $S$ of an underlying asset (typically a stock) with respect to the strike price, i.e. $S=\gamma K$. If $\gamma=1$, we say that the option is at-the-money (ATM), for $\gamma>1$ the call option is in-the-money (ITM) and for $\gamma<1$ it is out-of-the-money (OTM). For put options it is clearly the reverse.

In both models the money market is represented by a risk-free bond
\[ \d B_t = r B_t \d t,\]
with constant interest rate $r>0$.

\subsubsection{Black-Scholes model}

\label{sec2:BS}
Let $(\Omega,\F,\P)$ be a complete probability space with a fixed filtration $(\F_t)$ generated by a standard Wiener process $W_t^S$. In BS model the stock price process $S_t$ is modelled as a continuous semimartingale with respect to $(\F_t)$ and satisfies the stochastic differential equation
\begin{equation}\label{eq:dyn_BS}
\mathrm{d} S_t = \mu S_t\,\mathrm{d}t + \sigma S_t\,\mathrm{d}W_t^S,
\end{equation}
where drift $\mu\in\R$ and volatility $\sigma>0$ are constant.
The fair price $V_t=V(S_t,t)$ of a European call option with maturity $T$ and strike price $K$ is defined by the risk-free pricing formula
\begin{equation}\label{eq:pf_stoch}
V_t = \e^{-r(T-t)} \E^*[(S_T-K)^+ | \F_t ],
\end{equation}
where the conditional expectation is considered under the unique equivalent martingale measure $\P^*$ provided that $V$ is continuous. The equivalent measure $\P^*$ can be obtained from \eqref{eq:dyn_BS} by replacing $\mu$ by $\mu^*=r$ and keep $\sigma^*=\sigma$. It can be shown that $V$ also satisfies the Black-Scholes partial differential equation
\begin{equation*}
\frac{\partial}{\partial t}V + \frac12\sigma^2S^2\frac{\partial^2 V}{\partial S^2} + rS\frac{\partial V}{\partial S} - rV = 0
\end{equation*}
for $(S,t)\in(0,\infty)\times(0,T)$ with the terminal condition $V(S,T)=(S-K)^+$. There exit several approaches to obtain the PDE like replication of the derivative with a self-financing portfolio or delta hedging. For more details on replication strategies we refer to \cite[Chapter 5.8.B]{KaratzasShreve91}. We introduce new variables $\tau=T-t$ and $x=\ln S$, for the time till maturity and the logarithm of the stock price, respectively. For the function $u(x,\tau)=V(S,t)$ we obtain the parabolic Cauchy problem
\begin{equation}
\label{eq:BS}\tag{BS}
\left\{
\begin{alignedat}{2}
\frac{\partial}{\partial\tau}u(x,\tau)
&= \mathcal{L}^{BS}u(x,\tau)
\quad &&\mbox{for }(x,\tau)\in\R\times (0,T),\\
u(x,0) &= (\e^x-K)^+
\quad &&\mbox{for }x\in\R,
\end{alignedat}
\right.
\end{equation}
with the Black-Scholes operator
\begin{equation*}
\mathcal{L}^{\text{BS}}u :=
\frac12\sigma^2\frac{\partial^2}{\partial x^2}u + \left(r-\frac12\sigma^2\right)\frac{\partial}{\partial x}u - ru.
\end{equation*}

\cite{BlackScholes73} formula for the fair price % $u^{\text{BS}}(x,\tau)$ $V(S,t)$ 
of a European call option reads
\begin{align}
\label{BS_formula}
u^{\text{BS}}(x,\tau) &= \e^x N(d_1) - K\e^{-r\tau} N(d_2),
\intertext{where}
d_1 &= \frac{x-\ln{K}+\left(r+\frac12{\sigma^2}\right)\tau}{\sigma\sqrt{\tau}},\nonumber\\
d_2 &= d_1 - \sigma\sqrt{\tau},\nonumber
\end{align}
and $N(\cdot)$ denotes the cumulative distribution function of the standard normal distribution.

\subsubsection{Heston model}

Let $(\Omega,\F,\P)$ be a complete probability space with a filtration $(\F_t)$ and let $W_t^S$ and $W_t^v$ be two standard Wiener processes with respect to the filtration that are correlated by a factor $\rho\in[-1,1]$. In contrast to BS model, in the Heston model the volatility is modelled as the square-root of a mean-reverting stochastic process $v_t$. Both, the stock price process $S_t$ and $v_t$ are continuous semimartingales with respect to $(\F_t)$. The model dynamics are 
\begin{align}\label{eq:dyn_H}
\begin{split}
\mathrm{d} S_t &= \mu S_t\,\mathrm{d}t + \sqrt{v_t} S_t\,\mathrm{d}W_t^S \\
\mathrm{d} v_t &= \kappa(\theta-v_t)\,\mathrm{d}t + \tilde{\sigma}\sqrt{v_t}\,\mathrm{d}W_t^v \\
\rho \mathrm{d}t &= \d W_t^S \d W_t^v.
\end{split}
\end{align}
The drift $\mu\in\R$ and parameter (also called volatility of volatility) $\tilde{\sigma}>0$ are constant. The mean-reverting stochastic variance process $v_t$, also referred to as the \cite{CIR85} process, with constant rate of mean reversion $\kappa$ and long-run mean level $\theta$, both positive, is strictly positive provided that the so called Feller's condition $2\kappa\theta>\tilde{\sigma}^2$ holds. Again, if the function $V_t=V(S_t,v_t,t)$ of the option price is continuous then it is given by the pricing formula \eqref{eq:pf_stoch} for an equivalent martingale measure $\P^*$ that we get from \eqref{eq:dyn_H} by replacing $\mu$, $\kappa$, $\theta$ by $\mu^*=r$, $\kappa^*=\kappa+\lambda>0$, $\theta^*=\kappa\theta/\kappa^*$, respectively, and keep $\tilde{\sigma}^*=\tilde{\sigma}$ and $\rho^*=\rho$. The parameter $\lambda\in[0,\infty)$ is referred to as the price of volatility risk. The price $V$ also satisfies a partial differential equation 
\begin{align*}
\frac{\partial}{\partial t}V + \frac12 vS^2\frac{\partial^2}{\partial S^2}V + \rho\tilde{\sigma} vS\frac{\partial^2}{\partial S\partial v} V  &+ \frac12 \tilde{\sigma}^2  v\frac{\partial^2}{\partial v^2}V \\
&+ rS\frac{\partial}{\partial S}V + [\kappa(\theta-v)-\lambda v]\frac{\partial}{\partial v}V- rV = 0
\end{align*}
for $(S,v,t)\in(0,\infty)^2\times(0,T)$ with the terminal condition $V(S,v,t)=(S-K)^+$ which can be proved for instance with the help of a replicating self-financing portfolio, e.q., one could easily modify the proof in \cite[Section 2.4]{Fouque00}, where an Ohrnstein-Uhlenbeck process is used instead of the CIR process. Without loss of generality we set $\lambda = 0$ by using standard transformation techniques \citep[Sec. 3]{Heston93}.

As above, we introduce the new variables $\tau=T-t$ and $x=\ln S$. For the function $u(x,v,\tau)=V(S,v,t)$ we obtain the initial value problem
\begin{equation}
\label{eq:H}\tag{H}
\left\{
\begin{alignedat}{2}
\frac{\partial}{\partial\tau}u(x,v,\tau)
&= \mathcal{L}^{\text{H}}u(x,v,\tau)
\quad &&\mbox{for }(x,v,\tau)\in\R\times(0,\infty)\times(0,T),\\
u(x,v,0) &= (\e^x-K)^+
\quad &&\mbox{for }(x,v)\in\R\times(0,\infty),
\end{alignedat}
\right.
\end{equation}
with the partial differential operator
\begin{equation*}
\mathcal{L}^{\text{H}}u :=
\frac12v\frac{\partial^2u}{\partial x^2}
+ \rho\tilde{\sigma} v\frac{\partial^2u}{\partial x\partial v}
+ \frac12\tilde{\sigma}^2 v\frac{\partial^2u}{\partial v^2}
+ \left(r-\frac12v\right)\frac{\partial u}{\partial x}
+ \kappa(\theta-v) \frac{\partial u}{\partial v}
- ru.
\end{equation*}

In the book by \cite{Lewis00}, the author presents the so called fundamental transform approach for solution of the initial value problem \eqref{eq:H}. We present here only the pricing formula that has among others one numerical advantage in the sense that we have to calculate only one numerical integral for each price of the option (compared to the two-integrals formula by Heston). The price of the European call option can be expressed as the so called Heston-Lewis formula

\begin{align}\label{e:lewis}
u^{\text{H}}(x,v,\tau) &= \e^x - K~\e^{-r\tau} \frac1{\pi} \int_{0+i/2}^{+\infty+i/2} {e^{-ikX} \frac{\hat H(k,v,\tau)}{k^2-ik}} \d k,
\end{align}
where $X= x - \ln(K) + r\tau$ and
\begin{align*}
\hat H(k,v,\tau) =& \exp\Bigg( \frac{2\kappa\theta}{\tilde{\sigma}^2} \bigg[ q\,g - \ln\bigg(\frac{1-he^{-\xi q}}{1-h} \bigg)\bigg]%+\\
+ vg \bigg(\frac{1-e^{-\xi q}}{1-he^{-\xi q}} \bigg) \Bigg),
\end{align*}
where
\begin{align*}
g &= \frac{b-\xi}{2},\quad h=\frac{b-\xi}{b+\xi},\quad q=\frac{\tilde{\sigma}^2 \tau}{2}, \\
\xi &= \sqrt{ b^2 + \frac{4(k^2 - ik)}{\tilde{\sigma}^2} }, \\
b &= \frac{2}{\tilde{\sigma}^2} \Big(ik\rho\tilde{\sigma} + \kappa\Big).
\end{align*}

To show that the original \cite{Heston93} pricing formula and \eqref{e:lewis} are equivalent, we refer to the paper by \cite{BaustianMrazekPospisilSobotka17asmb}, where the authors also extended Lewis's approach to models with jumps. 

% ------------------------------------------------------------------------------ end: preliminaries.tex
% ------------------------------------------------------------------------------ begin: methods.tex
\section{Methodology}\label{sec:methodology}

In this section we present our main results, in particular we introduce our Galerkin-based method. First, we establish the weak formulation of the Black-Scholes equation in a weighted Lebesgue space and show how we can solve the equation in finite-dimensional subspaces spanned by Hermite polynomials. The smooth solutions in the finite-dimensional subspaces approximate the weak solution of the Black-Scholes equation. Although the Black-Scholes model has already been studied in detail, Section~\ref{sec:BSPDE} gives us a good understanding how the method should work for the more complicated Heston model. Second, we establish the method for the Heston model and study the equation for vanishing volatility.

As we have pointed out in the introduction, the Galerkin method for parabolic equations and their convergence properties were widely studied in the past. Even so, our applications are special in the sense that we have an unbounded domain and unbounded initial data. Most numerical schemes for unbounded domains just cut the domain at a certain point. Contrary to this, we use an orthogonal base on the whole unbounded domain. To treat the unbounded initial condition we consider weighted Lebesgue spaces.

\subsection{Solution of the Black-Scholes PDE}\label{sec:BSPDE}

Let us now consider the parabolic Cauchy problem for the function $u(x,\tau)=V(S,t)$ introduced in Section \ref{sec2:BS}
\begin{align}
\label{BS:PDE}
\left\{
\begin{alignedat}{2}
\frac{\partial}{\partial\tau}u(x,\tau)
&= \mathcal{L}^{\text{BS}}u(x,\tau)
\quad &&\mbox{for }(x,\tau)\in\R\times (0,T),\\
u(x,0) &= (\e^x-K)^+
\quad &&\mbox{for }x\in\R,
\end{alignedat}
\right.
\end{align}
with the Black-Scholes operator
\begin{equation*}
\mathcal{L}^{\text{BS}}u :=
\frac12\sigma^2\frac{\partial^2}{\partial x^2}u + \left(r-\frac12\sigma^2\right)\frac{\partial}{\partial x}u - ru.
\end{equation*}
The initial data is obviously not in $L^2(\R)$ but in the weighted Lebesgue space $L^2(\R,w\dx)$ with the weight function $w(x)=\e^{-x^2}$ and even in the weighted Sobolev space $H^1(\R,w\dx)$.
We want to obtain a weak formulation of the problem in the weighted space. Therefore, we multiply the partial differential equation \eqref{BS:PDE} with a test function $\phi \in C^{\infty}_0(\R)$ and the weight function $w$. If we integrate over $\R$
then integration by parts yields
\begin{equation*}
\int_{-\infty}^{\infty}\frac{\partial u}{\partial\tau}\phi w\dx
+
\int_{-\infty}^{\infty} \left[ \frac12 \sigma^2 \frac{\partial u}{\partial x} \frac{\partial \phi}{\partial x}  - \left( r - \frac12 \sigma^2 +\sigma^2 x\right) \frac{\partial u}{\partial x} \phi + r u \phi\right] w \dx = 0.
\end{equation*}
Following the standard procedure described for example in \cite[p. 296]{Evans10}, 
we define the bilinear form
\begin{equation}
\label{BS:B}
\B(\varphi,\psi) := \int_{-\infty}^{\infty} \left[ \frac12\sigma^2 \frac{\partial \varphi}{\partial x}\frac{\partial \psi}{\partial x} + \left(\frac12\sigma^2-r-\sigma^2x\right) \frac{\partial \varphi}{\partial x}\psi + r\varphi\psi \right] w \dx
\end{equation}
for $\varphi,\psi\in H^1(\R,w\dx)$. We call
$$
u\in L^2((0,T)\to H^1(\R,w\dx)) \quad\mbox{with}\quad \frac{\mathrm{d}}{\mathrm{d}\tau}u\in L^2((0,T)\to H^{-1}(\R,w\dx))
$$
a weak solution of \eqref{BS:PDE} if 
\begin{equation}\label{BS:weak}
\left\langle \frac{\d}{\d \tau} u, \phi \right\rangle_w + \B(u, \phi) = 0
\end{equation}
for each test function $\phi \in H^1(\R, w\dx)$ and a.e. time $0\leq\tau\leq T$, and $u(0)=(\e^{x}-K)^{+}$. Here, $H^{-1}$ is the dual space of the Sobolev space $H^1$ and can be canonically identified with it by the Riesz representation theorem. The existence of the unique weak solution in the weighted space can be obtained by modifying the proof of the Galerkin method in \cite[Chapter 7.1, p. 349]{Evans10}.

Following the Galerkin method, we want to approximate the weak solution $u$ with solutions $u_M$ of the Cauchy problem \eqref{BS:PDE} in the finite--dimensional subspace $S^H_M$, i.e., we look for a solution $u_M$ in the form
\begin{equation}
u_M (x, \tau) = \sum_{k=0}^{M} c_k (\tau) H_k (x) \label{BS:sol}
\end{equation}
with a given initial condition
\begin{equation}
u_M (x,0) = \sum_{k=0}^{M} c_k (0) H_k (x), \label{BS:ic}
\end{equation}
where $c(\tau)$ is a column vector of Fourier coefficients $c(\tau) = \left[ c_0 (\tau), c_1 (\tau), \ldots , c_M (\tau) \right]^\text{T}$, where $\text{T}$ denotes the transposition (not to be confused with time $T$). 

The natural choice for the initial condition is the orthogonal projection of the payoff function $\Pi^H_Mu(x,0)$ defined in \eqref{approx}. For instance, the coefficients in the initial condition \eqref{BS:ic} satisfy
\begin{equation}\label{BS:ic:coef}
c_k(0) = c_{0,k} := \int_{-\infty}^{+\infty} (\e^x - K)^+ H_k(x) \e^{-x^2} \d x, \quad k=0,1,\dots,M,
\end{equation}
or in vector form $c(0) = c_0 = [c_{0,0}, c_{0,1}, \dots, c_{0,M}]^\text{T}$.

Let us now substitute $\varphi = H_i (x), \psi = H_j (x)$ into the bilinear form \eqref{BS:B}. 
In the view of Lemma~\ref{lem25} we can simplify the term $\B(H_i,H_j)$ and obtain the explicit form
\begin{equation}
\frac{1}{2^j j! \sqrt{\pi}}\B(H_i,H_j)
= i (\sigma^2 - 2r) \delta_{i,j+1} - 2 i \sigma^2 (j+1) \delta_{i,j+2} + r \delta_{i,j}.  \label{BS_matrixB}
\end{equation}

We plug \eqref{BS:sol} into \eqref{BS:weak} and choose the Hermite polynomial $H_j$ as the test function
\begin{equation*}
\left\langle \sum_{k=0}^{M} c_k' (\tau) H_k(x), H_j(x) \right\rangle_w + \sum_{k=0}^{M} c_k (\tau) \B(H_k(x),H_j(x)) = 0.
\end{equation*}
We make use of the orthogonality of the Hermite polynomials to obtain a system of ODEs
\begin{equation}
c_j' (\tau) + \frac{1}{2^j j! \sqrt{\pi}} \sum_{k=0}^{M} c_k(\tau) \B(H_k(x), H_j(x)) = 0, \quad j=0,1,...,M, \label{BS:ODEs_indiv}
\end{equation}
that possesses a unique solution to the initial data \eqref{BS:ic:coef} by standard existence theory.

Let us introduce a matrix $B = \left[ B_{k,j} \right]$, $k,j = 0, 1, \ldots, M$,
with elements
\begin{equation}\label{BS_matrixBdef}
B_{k,j}:= \frac{1}{2^j j! \sqrt{\pi}} \B(H_k, H_j)
\end{equation}
and denote by $B^{T}$ the transposed matrix\footnote{In the implementation, one can easily swap the arguments of the bilinear form in \eqref{BS_matrixBdef} in order to get an already transposed matrix. However, in the text we prefer the \emph{natural ordering} and hence the transposition in the formulas below is needed.}. From \eqref{BS_matrixB} we can easily see that $B^\text{T}$ is a three-diagonal matrix with entries on the main diagonal and two superdiagonals.
With the matrix $B^\text{T}$ we can rewrite \eqref{BS:ODEs_indiv} in the matrix form as
\begin{equation}
\frac{\d }{\d \tau} c(\tau) + B^\text{T} c(\tau) = 0, \qquad c(0) = c_0. \label{BS:ODEs}
\end{equation}
We can write the solution in terms of the matrix exponential as
\begin{equation}\label{BS:expm}
c(\tau) = \e^{-B^\text{T} \tau} c_0.
\end{equation}

\subsection{Solution of the Heston PDE}\label{sec:HestonPDE}

Let us now consider the \cite{Heston93} model with stochastic volatility. As above, we can use the Hermite polynomials for the polynomial expansion in the variable connected to the logarithm of the stock price. However, for the volatility variable we prefer Laguerre polynomials due to the fact that the volatility is strictly positive. The Cauchy problem connected to the model of \cite{Heston93} is
\begin{equation}
\label{H:PDE}
\left\{
\begin{alignedat}{2}
\frac{\partial}{\partial\tau}u(x,v,\tau)
&= \mathcal{L}^{\text{H}}u(x,v,\tau)
\quad &&\mbox{for }(x,v,\tau)\in \R \times (0,\infty) \times (0,T),\\
u(x,v,0) &= (\e^x-K)^+
\quad &&\mbox{for }x\in \R \times (0,\infty).
\end{alignedat}
\right.
\end{equation}
with the Heston operator
\begin{equation*}
\mathcal{L}^{\text{H}}u :=
\frac12 v \frac{\partial^2 u}{\partial x^2} + \rho \tilde{\sigma} v \frac{\partial^2 u}{\partial x \partial v} + \frac12 \tilde{\sigma}^2 v \frac{\partial^2 u}{\partial v^2} \\
+ \left( r - \frac12 v \right) \frac{\partial u}{\partial x} + \kappa(\theta-v) \frac{\partial u}{\partial v} - ru.
\end{equation*}

To obtain a weak formulation of the solution we multiply \eqref{H:PDE} with a test function $\phi \in C^{\infty}_0(\R \times \R^+)$ and the weight function $w(x,v)=\e^{-x^2-v}$. Integration over the domain $\R \times (0, \infty)$ and application of Gauss's theorem then yields the variational formulation of the problem
\begin{equation*}
\int_{0}^{+\infty} \int_{-\infty}^{+\infty} \frac{\partial u}{\partial \tau} \phi w\dx\dv + \tilde{\B}(u,\phi) = 0
\end{equation*}
with the bilinear form $\tilde{\B}$ defined by
\begin{align}
\begin{split}\label{H:B}
\tilde{\B}(\varphi, \psi) &= \int_{0}^{+\infty} \int_{-\infty}^{+\infty} \frac12 v \frac{\partial \varphi}{\partial x} \frac{\partial \psi}{\partial x} w \dx \d v + \int_{0}^{+\infty} \int_{-\infty}^{+\infty} \frac12 \rho \tilde{\sigma} v \frac{\partial \varphi}{\partial v} \frac{\partial \psi}{\partial x} w \dx \d v \\
& + \int_{0}^{+\infty} \int_{-\infty}^{+\infty} \frac12 \rho \tilde{\sigma} v \frac{\partial \varphi}{\partial x} \frac{\partial \psi}{\partial v} w \dx \d v
 + \int_{0}^{+\infty} \int_{-\infty}^{+\infty} \frac12 \tilde{\sigma}^2 v \frac{\partial \varphi}{\partial v} \frac{\partial \psi}{\partial v} w \dx \d v \\
& + \int_{0}^{+\infty} \int_{-\infty}^{+\infty} \left( -xv - \frac12 \rho \tilde{\sigma} v + \frac12 \rho \tilde{\sigma} -r +\frac12 v\right) \frac{\partial \varphi}{\partial x} \psi w \dx \d v \\
& + \int_{0}^{+\infty} \int_{-\infty}^{+\infty} \left( -xv\rho \tilde{\sigma} -\frac12\tilde{\sigma}^2 v + \frac12 \tilde{\sigma}^2 -\kappa (\theta - v) \right) \frac{\partial \varphi}{\partial v} \psi w \dx \d v \\
& + \int_{0}^{+\infty} \int_{-\infty}^{+\infty} r \varphi \psi w \dx \d v
\end{split}
\end{align}
for all $\varphi,\psi\in H^1(\R\times\R^{+},w\dx\dv)$.

Similarly as for the BS model, we substitute the elements of the complete orthogonal set $\varphi = P_{i,j} (x,v) = H_i (x) L_j (v)$ and $\psi = P_{k,l} (x,v)= H_k (x) L_l (v)$ into the bilinear form \eqref{H:B}. For the purpose of better clarity, we study all seven integral terms separately. In particular, let
\begin{equation}\label{h:matrixB}
\frac{1}{2^k k! \sqrt{\pi}} \tilde{\B}(P_{i,j}(x,v),P_{k,l}(x,v)) := \frac{1}{2^k k! \sqrt{\pi}} \sum\limits_{r=1}^7 \tilde{\B}_r(P_{i,j}(x,v),P_{k,l}(x,v)),
\end{equation}
where each $\tilde{\B}_r(P_{i,j}(x,v),P_{k,l}(x,v))$, $r$ = 1, 2, ..., 7, represents individual integral terms. 

\begin{theorem}\label{t:int}
The integrals in \eqref{h:matrixB} satisfy
\begin{align*}
\frac{1}{2^k k! \sqrt{\pi}} & \tilde{\B}_1(P_{i,j},P_{k,l})
= i \delta_{i,k} ((2j+1) \delta_{j,l} - j \delta_{j-1,l} - (j+1) \delta_{j+1,l}), \\
\frac{1}{2^k k! \sqrt{\pi}} & \tilde{\B}_2(P_{i,j},P_{k,l})
= \frac12 \rho \tilde{\sigma} \delta_{i+1,k} j (\delta_{j,l} - \delta_{j-1,l}), \\
\frac{1}{2^k k! \sqrt{\pi}} & \tilde{\B}_3(P_{i,j},P_{k,l}) 
= i \rho \tilde{\sigma} \delta_{i,k+1} l (\delta_{j,l} - \delta_{j,l-1}), \\
\frac{1}{2^k k! \sqrt{\pi}} & \tilde{\B}_4(P_{i,j},P_{k,l}) 
= \frac12 \tilde{\sigma}^2 j \delta_{i,k} \delta_{j,l}, \\
\frac{1}{2^k k! \sqrt{\pi}} & \tilde{\B}_5(P_{i,j},P_{k,l})
= 2i \left( \frac12 \rho \tilde{\sigma} - r \right) \delta_{i,k+1} \delta_{j,l}, \\
&+ [-2i(k+1) \delta_{i,k+2}-i \delta_{i,k}][(2j+1) \delta_{j,l} - j \delta_{j-1,l} - (j+1) \delta_{j+1,l}] \\
&+ [i (1-\rho \tilde{\sigma} ) \delta_{i,k+1}][(2j+1) \delta_{j,l} - j \delta_{j-1,l} - (j+1) \delta_{j+1,l}] \\
\frac{1}{2^k k! \sqrt{\pi}} & \tilde{\B}_6(P_{i,j},P_{k,l}) 
= \left( \frac12 \tilde{\sigma}^2 -\kappa \theta \right) \delta_{i,k} \left( - \sum_{a=0}^{j-1} \delta_{a,l} \right) \\
&+ \left[- \rho \tilde{\sigma} \left( \frac12 \delta_{i+1,k} + i \delta_{i-1,k} \right)+ \left( \kappa - \frac12 \tilde{\sigma}^2 \right) \delta_{i,k}\right] j (\delta_{j,l} - \delta_{j-1,l} ), \\
\frac{1}{2^k k! \sqrt{\pi}} & \tilde{\B}_7(P_{i,j},P_{k,l}) 
= r \delta_{i,k} \delta_{j,l},
\end{align*}
for all $0\leq i,k\leq M$ and all $0\leq j,l\leq N$.
\end{theorem}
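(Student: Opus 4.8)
The strategy is direct computation: substitute $\varphi = P_{i,j} = H_i(x)L_j(v)$ and $\psi = P_{k,l} = H_k(x)L_l(v)$ into the seven integral terms $\tilde{\B}_r$ appearing in \eqref{H:B}, and in each case factor the double integral over $\R\times\R^+$ into a product of a one-dimensional $x$-integral against the weight $\e^{-x^2}$ and a one-dimensional $v$-integral against $\e^{-v}$. Each resulting one-dimensional integral is then one of the elementary integrals catalogued in Lemma~\ref{lem25} (Hermite) or Lemma~\ref{lem28} (Laguerre), possibly combined with the orthogonality relations $\langle H_m, H_n\rangle_w = 2^n n!\sqrt{\pi}\,\delta_{m,n}$ and $\langle L_m, L_n\rangle_w = \delta_{m,n}$. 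Multiplying through by the normalising factor $1/(2^k k!\sqrt{\pi})$ cancels the Hermite norm and leaves precisely the Kronecker-delta expressions claimed in the theorem.

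**Key steps in order.** First I would handle the ``pure second-order'' terms $\tilde{\B}_1$, $\tilde{\B}_3$ (mixed), $\tilde{\B}_4$, since these separate cleanly: e.g.\ for $\tilde{\B}_1 = \int\int \tfrac12 v\, \p_x\varphi\,\p_x\psi\, w$, the $x$-part is $\tfrac12\langle H_i', H_k'\rangle_{\e^{-x^2}} = 2i\cdot 2^k k!\sqrt{\pi}\,\delta_{i,k}$ by \eqref{IH1}, and the $v$-part is $\int_0^\infty v L_j L_l \e^{-v}\dv$, given by \eqref{IL1}; dividing by $2^k k!\sqrt{\pi}$ yields the stated $i\,\delta_{i,k}\big((2j+1)\delta_{j,l} - j\delta_{j-1,l} - (j+1)\delta_{j+1,l}\big)$. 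The mixed terms $\tilde{\B}_2$ and $\tilde{\B}_3$ use \eqref{IH2} for the $x$-factor $\langle H_i', H_k\rangle$ or $\langle H_i, H_k'\rangle$ and \eqref{IL2} for the $v$-factor $\int v L_j' L_l \e^{-v}\dv$ (in $\tilde{\B}_3$ one transposes roles, producing $l(\delta_{j,l} - \delta_{j,l-1})$ after reindexing the Laguerre integral). Next come the first-order terms $\tilde{\B}_5$ and $\tilde{\B}_6$: these carry polynomial coefficients in $x$ and $v$, so after distributing one must split each into several pieces. The coefficient $-xv$ in $\tilde{\B}_5$ contributes a factor $\int x H_i' H_k \e^{-x^2}\dx$, handled by \eqref{IH4}, times $\int v L_j L_l \e^{-v}\dv$ from \eqref{IL1}; the constant-in-$v$ pieces use \eqref{IH2} instead. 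Similarly $\tilde{\B}_6$ with its $\p_v\psi$-structure pulls in \eqref{IL4} (for the $L_l$, not $L_l'$, pairing against $L_j'$) and \eqref{IL2}, together with \eqref{IH5} for the $xH_iH_k$ term. Finally $\tilde{\B}_7 = r\int\int \varphi\psi\,w$ is pure orthogonality on both factors, giving $r\,\delta_{i,k}\delta_{j,l}$.

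**Main obstacle.** The conceptual content is light; the real difficulty is bookkeeping in $\tilde{\B}_5$ and $\tilde{\B}_6$. The bilinear form \eqref{H:B} already incorporates integration-by-parts corrections (the $-xv$, $\tfrac12\rho\tilde{\sigma}$, etc.\ coefficients that arise from moving derivatives off $w = \e^{-x^2-v}$), so each of these two terms is genuinely a sum of three or four separate Hermite$\times$Laguerre products, each with its own shift in the index deltas. Keeping the index arithmetic straight — in particular tracking how $\delta_{i+1,k}$ versus $\delta_{i,k+1}$ versus $\delta_{i,k+2}$ arise from \eqref{IH2}, \eqref{IH4}, \eqref{IH5} under the chosen natural ordering of arguments, and how the Laguerre sum $\sum_{a=0}^{j-1}\delta_{a,l}$ from \eqref{IL4} combines with the $j(\delta_{j,l}-\delta_{j-1,l})$ factors — is where errors would creep in. Each of these computations is routine but must be carried out carefully; I would organise the proof as seven short displayed calculations, one per $\tilde{\B}_r$, citing the relevant line of Lemma~\ref{lem25} or Lemma~\ref{lem28} at each step and performing the normalisation at the end.
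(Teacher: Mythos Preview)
Your proposal is correct and follows essentially the same approach as the paper: factor each double integral into a Hermite $x$-integral and a Laguerre $v$-integral, then invoke the appropriate identities from Lemmas~\ref{lem25} and~\ref{lem28} (together with the variant $\frac{1}{2^nn!\sqrt{\pi}}\int H_m'H_n\e^{-x^2}\dx = 2m\,\delta_{m,n+1}$, which the paper singles out as \eqref{IH3} and which you correctly describe as \eqref{IH2} with the roles transposed). The paper's proof is in fact even terser than yours---it merely lists which identities feed into each $\tilde{\B}_r$ and defers the arithmetic to the thesis---so your more explicit bookkeeping for $\tilde{\B}_5$ and $\tilde{\B}_6$ is entirely in line with, and slightly more detailed than, the original.
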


\begin{proof}
For the calculation of $\tilde{\B}_1$ we apply \eqref{IH1} and \eqref{IL1}. For $\tilde{\B}_2$ we use \eqref{IH2} and \eqref{IL2}. $\tilde{\B}_3$ is derived with the help of a modification of \eqref{IH2}
\begin{equation}\label{IH3}
\frac{1}{2^n n! \sqrt{\pi}} \int_{-\infty}^{\infty} H_m' (x) H_n (x)\e^{-x^2} \d x = 2m \delta_{m,n+1}.
\end{equation}
and \eqref{IL2}. For $\tilde{\B}_4$ we need \eqref{IL3}. In the calculation of $\tilde{\B}_5$ we make use of \eqref{IH4}, \eqref{IL1}, and \eqref{IH3}. For $\tilde{\B}_6$ we need the same equations as for $\tilde{\B}_5$ and \eqref{IL4}. $\tilde{\B}_7$ is trivial. More detailed calculations can be found in the thesis \citep[Sec. 3.2]{Filipova19}.
\end{proof}

In analogy to the BS case, we say that $u\in L^2((0,T)\to H^1(\R\times\R^{+},w\dx\dv))$ with $\frac{\mathrm{d}}{\mathrm{d}\tau}u\in L^2((0,T)\to H^{-1}(\R\times\R_0^{+},w\dx\dv))$ is a weak solution of \eqref{H:PDE} if

\begin{equation}\label{H:weak}
\left\langle \frac{\d}{\d \tau} u, \phi \right\rangle_w + \tilde{\B}(u, \phi) = 0
\end{equation}
for each test function $\phi \in H^1(\R\times\R^{+}, w\dx\dv)$ and a.e. time $0\leq\tau\leq T$, and $u(0,v)=(\e^{x}-K)^{+}$ for all $v>0$. The existence and uniqueness of the weak solution is given by the Galerkin Method in \cite[Chapter 7.1, p. 349]{Evans10}. A detailed proof of the existence of the unique solution in a convenient weighted space considering the boundary conditions can be found in \cite{Alziary18}.

We study solutions of the Cauchy problem \eqref{H:PDE} in finite--dimensional subspaces $S_{M,N}$, i.e. we look for the solution $u$ in the form
\begin{equation}
u_{M,N} (x,v,\tau) = \sum_{i=0}^{M} \sum_{j=0}^{N} c_{i,j} (\tau) P_{i,j} (x,v), \label{eq:solH}
\end{equation}
where
$$
P_{i,j} (x,v) = H_i (x) L_j (v),\quad i, j \in \N_0
$$
and $c_{i,j}(\tau)$, $i=0,1,\dots,M$; $j=0,1,\dots,N$; are (yet unknown) Fourier coefficients.

Let $c(\tau) = [c_a(\tau)]^\text{T}$, $a=0,1,\dots,(M+1)(N+1)$, be a column vector of these coefficients, where $a=i(N+1)+j$, $i=0,1,\dots,M$; $j=0,1,\dots,N$; i.e.

$$c(\tau) = [c_{0,0}(\tau),\dots,c_{0,N}(\tau),c_{1,0}(\tau),\dots,c_{1,N}(\tau),\dots,c_{M,0}(\tau),\dots,c_{M,N}(\tau)]^\text{T}.$$

For the initial data we choose the orthogonal projection of the payoff function $\Pi_{M,N}u(0,v)$, where for $i=0,1,\dots,M$, $j=0,1,\dots,N$
\begin{equation}\label{H:system_odes_ic}
c_{i,j}(0) = c_{0,i,j} := \int_{0}^{+\infty} \int_{-\infty}^{+\infty} (\e^x - K)^+ P_{i,j} (x,v) \e^{-x^2} \e^{-v} \dx \d v,
\end{equation}
or in vector form $c(0) = c_0 = [c_{0,0,0},\dots,c_{0,0,N},c_{0,1,0},\dots,c_{0,1,N},\dots,c_{0,M,0},\dots,c_{0,M,N}]^\text{T}$.

We use \eqref{H:weak} with $u_{M,N}$ of the form \eqref{eq:solH} and the test function $P_{k,l}$. Thanks to the orthogonality of the polynomials we obtain
\begin{equation*}
c_{k,l}'(\tau) + \sum_{i=0}^{M} \sum_{j=0}^{N} c_{i,j} (\tau) \frac{1}{2^k k! \sqrt{\pi}} \tilde{\B}(P_{i,j} (x,v), P_{k,l} (x,v)) = 0.
\end{equation*}

Let us introduce a matrix $\tilde{B} = [\tilde{B}_{a,b}]$, $a,b=0,1,\dots,(M+1)(N+1)$ defined as
\begin{equation}\label{H:Bab}
\tilde{B}_{a,b} = \frac{1}{2^k k! \sqrt{\pi}} (\tilde{\B}(P_{i,j} (x,v),P_{k,l} (x,v))) ,
\end{equation}
where $a = i(N+1) + j; b = k(N+1)+l; i,k=0,\ldots,M; j,l=0,\ldots,N$. Using this assembly\footnote{Swapping the arguments in the bilinear form in \eqref{H:Bab} can again easily produce an already transposed matrix.} it can be shown (by using Theorem \ref{t:int}) that the transposed matrix $\tilde{B}^\text{T}$ is an upper triangular matrix with elements on the main diagonal and $2N+3$ superdiagonals if $N>0$ and 2 superdiagonals in the degenerate case $N=0$ like in the BS case. It is worth to mention that the BS PDE is not a special case of the Heston PDE. The superdiagonal $2N+3$ is a contribution of the term $\tilde{\B}_5$, whose elements lie on seven superdiagonals ($1,N,N+1,N+2,2N+1,2N+2,2N+3$).

As above, we obtain a system of ODEs
\begin{equation}
\frac{\d}{\d \tau} c(\tau) + \tilde{B}^\text{T} c(\tau) = 0, \qquad c(0) = c_0. \label{H:system_odes}
\end{equation}
The solution can also be written in terms of the matrix exponential as
\begin{equation}\label{H:expm}
c(\tau) = \e^{-\tilde{B}^\text{T} \tau} c_0.
\end{equation}

\subsubsection{Solution behaviour analysis near $v = 0$}\label{sec:transport}

We are interested in the behaviour of the solution of the Heston PDE for small volatility, especially at the boundary $v=0$. Motivated by \cite{Alziary20pre}, we study the partial differential equation for $v \rightarrow 0+$.

The solution $u = u(x,v,\tau)$ satisfies the Heston PDE
\begin{equation*}
\frac{\partial}{\partial\tau}u =
\frac12 v \frac{\partial^2 u}{\partial x^2} + \rho \tilde{\sigma} v \frac{\partial^2 u}{\partial x \partial v} + \frac12 \tilde{\sigma}^2 v \frac{\partial^2 u}{\partial v^2} \\
+ \left( r - \frac12 v \right) \frac{\partial u}{\partial x} + \kappa(\theta-v) \frac{\partial u}{\partial v} - ru
\end{equation*} 
in $\R\times\R_0^{+}\times(0,T)$ and can be rewritten as
\begin{equation*}
\frac{\partial u}{\partial \tau} = v \left( \frac12 \frac{\partial^2 u}{\partial x^2} + \rho\tilde{\sigma}\frac{\partial^2 u}{\partial x \partial v} +\frac12 \tilde{\sigma}^2 \frac{\partial^2 u}{\partial v^2} \right) + \left( r - \frac12 v\right) \frac{\partial u}{\partial x} + \kappa(\theta - v) \frac{\partial u}{\partial v} -ru.
\end{equation*}
For $v \rightarrow 0+$ the equation degenerates to the first order equation as shown in \cite[cor. 4.3]{Alziary20pre}
\begin{align*}
\frac{\partial u}{\partial \tau} = r \frac{\partial u}{\partial x} + \kappa\theta \frac{\partial u}{\partial v} - ru.
\end{align*}
Since we want to study the problem for vanishing volatility, we replace the derivative with respect to $v$ by the differential quotient $\frac{1}{h}(u(x,h,\tau)-u(x,0,\tau))$, where $h>0$ denotes a small distance to the boundary. By doing this, we obtain an initial value problem on the boundary
\begin{align}
\label{eq:bound}
\left\{
\begin{alignedat}{2}
\mathcal{L}^{\text{B}}u(x,0,\tau)
&= \frac{\kappa\theta}{h} u(x,h,\tau)
\quad &&\mbox{for }(x,\tau)\in\R\times (0,T),\\
u(x,0,0) &= (\e^x-K)^+
\quad &&\mbox{for }x\in\R,
\end{alignedat}
\right.
\end{align}
with the unknown function $u(x,0,\tau)$ for fixed volatility $v=0$ and with the differential operator
$$
\mathcal{L}^{\text{B}}u=\frac{\partial u}{\partial \tau} - r \frac{\partial u}{\partial x} + \left(r+\frac{\kappa\theta}{h}\right)u.
$$
We can derive a solution of the Cauchy problem in dependence of the inhomogeneity which consists of values of the solution of the Heston equation away from the boundary.

We introduce a new variable $y=x+r\tau$ and the function $\tilde{u}(y,\tau)=u(x,0,\tau)$ that satisfies the inhomogeneous \emph{transport equation}
\begin{align*}
\frac{\partial\tilde{u}}{\partial \tau} + \left(\frac{\kappa\theta}{h} + r \right)\tilde{u} = \frac{\kappa\theta}{h}u(y-r\tau,h,\tau)
\end{align*}
with the initial condition $\tilde{u}(y,0)=(\e^{y-r\tau}-K)^+$. Following the standard procedure, we define the function
$$
U(y,\tau)=\e^{(\frac{\kappa\theta}{h}+r)\tau}\tilde{u}(y,\tau)
$$
and obtain
\begin{equation}
\label{e:transport}
\frac{\partial U}{\partial \tau}
= \frac{\kappa\theta}{h}\e^{(\frac{\kappa\theta}{h}+r)\tau}u(y-r\tau,h,\tau) \quad\mbox{with}\quad
U(y,0) = (\e^{y-r\tau} -K)^+.
\end{equation}
We integrate \eqref{e:transport} with respect to the time variable
\begin{equation*}
U(y,\tau) =(\e^{y-r\tau} -K)^+ + \frac{\kappa\theta}{h}\int_{0}^{\tau} \e^{(\frac{\kappa\theta}{h}+r)\xi}u(y-r\xi,h,\xi) \d \xi.
\end{equation*}
Hence, we get the \emph{boundary solution} that we denote as
\begin{equation*}
u^{\text{B}}(x,0,\tau) = \e^{-(\frac{\kappa\theta}{h}+r)\tau}\left[ (\e^x-K)^+ + \frac{\kappa\theta}{h} \int_{0}^{\tau} \e^{(\frac{\kappa\theta}{h}+r)\xi}u(x+r(\tau-\xi),h,\xi) \d \xi \right]
\end{equation*}
and in particular
\begin{equation}\label{e:transport_sol}
u^{\text{B}}(x,0,T) = \e^{-(\frac{\kappa\theta}{h}+r)T}\left[ (\e^x-K)^+ + \frac{\kappa\theta}{h} \int_{0}^{T} \e^{(\frac{\kappa\theta}{h}+r)\xi}u(x+r(T-\xi),h,\xi) \d \xi \right].
\end{equation}
These formulas contain an integral over the finite interval $[0,T]$. For the values of $u$ for $h>0$ we could make use of the polynomial expansion of the solution obtained in Section~\ref{sec:HestonPDE}. 

% ------------------------------------------------------------------------------ end: methods.tex
% ------------------------------------------------------------------------------ begin: results.tex
\section{Results}\label{sec:results}

In this section we present numerical results for several particular examples. All supporting codes are implemented in MATLAB. Parameter values in considered examples are chosen consistently with other cited resources in order to demonstrate the functionality of the proposed method. To provide a thorough analysis of the numerical solution for all possible parameter values combinations goes beyond the scope of present paper. When we refer to the $L^2$ error it is the error with respect to the norm of the weighted Lebesgue spaces $L^2(\R,\mathrm{e}^{-x^2}\dx)$ and $L^2(\R\times\R^{+},\mathrm{e}^{-x^2-v}\dx\dv)$, respectively. For convenience, point-wise error is calculated for several selected nodes as well as the average absolute and relative error. We compare the newly proposed solution to existing closed formula \eqref{BS_formula} for BS model and semi-closed formula \eqref{e:lewis} for Heston model, respectively.

\subsection{Black-Scholes model}\label{ssec:results:BS}

In the following setting for BS model the parameters are chosen as follows:
\begin{itemize}
\item volatility $\sigma = 0.03$,
\item risk free interest rate $r= 0.1$,
\end{itemize}
and options parameters are the following:
\begin{itemize}
\label{options_param}
\item maturity $T = 1$,
\item strike price $K = 100$,
\item stock price $S \in [0;2K]$ discretized with the equidistant step $\Delta S = 0.01$,
\end{itemize} 
and we impose $x = \ln(S)$ (with $\ln(\Delta S)$ being the smallest discretization point in $x$ variable). In the case of BS model, we choose Hermite polynomials as the complete orthogonal system of polynomials and focus on solving the Black-Scholes PDE.
Our numerical solution $u_M$ of the BS PDE \eqref{BS:PDE} is considered in the form \eqref{BS:sol}.
Fourier coefficients for $\tau = T$ are obtained by solving the system of ODEs~\eqref{BS:ODEs} with $B$ given by \eqref{BS_matrixB} and \eqref{BS_matrixBdef}. We use MATLAB ODE solver \texttt{ode45} to solve this system since it leads to smaller values of $L^2$~errors then the numerical calculation of matrix exponential in \eqref{BS:expm} using MATLAB procedure \texttt{expm}. See \citep[Sec.~4.1]{Filipova19} for a comparison of these two methods. 

The BS formula $u^{\text{BS}}$ and solutions $u_M$ obtained by \texttt{ode45} for $M=20$ and $M=120$ are shown in Figure~\ref{fig:pde_xS} on the left. For convenience the solution is plotted only for $x\geq 0$. The three vertical dashed lines represent moneyness $\gamma\in\{0.7, 1, 1.3\}$ respectively. On the right we can see the behaviour of the absolute error in this region.

% ------------------------------------------------------------------------------ begin: tables/tab2.tex
\begin{table}[ht]
\centering
\caption{Errors comparison for BS model for different Hermite polynomial orders: absolute error AE and relative error RE at several selected nodes are listed.}\label{tab:BS_AEREerrors}
\begin{tabular}{lllllll}
\toprule
$M$ & $\AE(0.7)$ & $\RE(0.7)$ & $\AE(1)$ & $\RE(1)$ & $\AE(1.3)$ & $\RE(1.3)$ \\
\midrule
20 & 0.22055 & 185.605 & 0.152195 & 0.027266 & 8.00953 & 0.243 \\
40 & 0.0991825 & 83.4675 & 0.287501 & 0.0515061 & 1.38974 & 0.0421632 \\
60 & 0.0875455 & 73.6744 & 0.249526 & 0.0447028 & 0.196272 & 0.00595466 \\
80 & 0.0476676 & 40.1149 & 0.225168 & 0.0403392 & 0.542587 & 0.0164615 \\
100 & 0.0209973 & 17.6704 & 0.229207 & 0.0410627 & 0.456833 & 0.0138598 \\
120 & 0.00662291 & 5.57354 & 0.230279 & 0.0412547 & 0.312095 & 0.00946862 \\
\bottomrule
\end{tabular}
\end{table}
% ------------------------------------------------------------------------------ end: tables/tab2.tex

\begin{figure}[h]
\centering
\includegraphics[width=\textwidth,trim={22mm 0 29mm 0},clip]{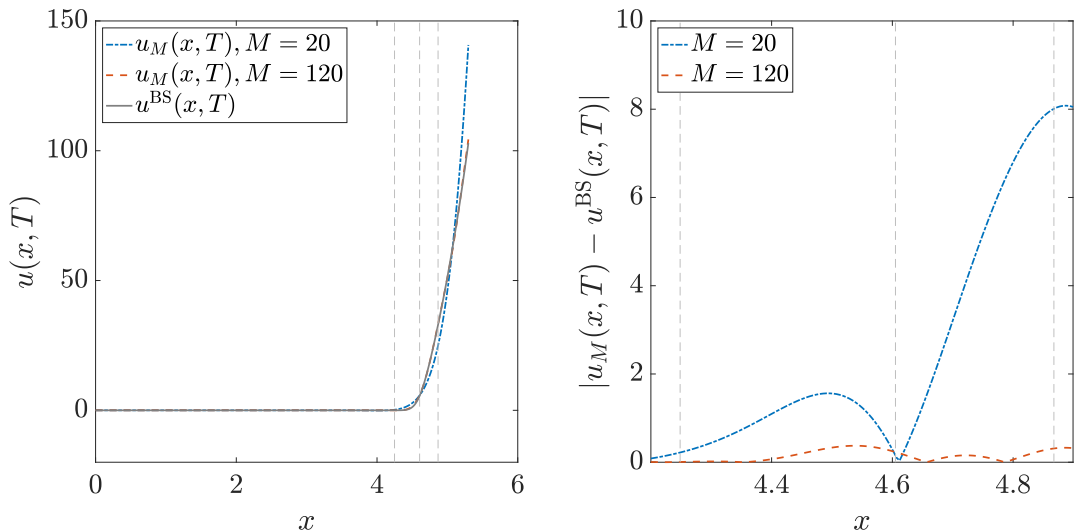}
\caption{Solution $u_M$ of the BS PDE for $M = 20$ and $M = 120$ together with the BS formula $u^{\text{BS}}$ is depicted on the left and the absolute error on the right. Vertical grid lines plotted at $\gamma\in\{0.7, 1, 1.3\}$.}
\label{fig:pde_xS}
\end{figure}

% ------------------------------------------------------------------------------ begin: tables/tab1.tex
\begin{table}[ht]
\centering
\caption{Errors comparison for BS model for different Hermite polynomial orders: $L^2$ error, average absolute error AAE and average relative error ARE are listed for two sets of points 1 and 2.}\label{tab:BS_L2errors}
\begin{tabular}{llllll}
\toprule
$M$ & $L^2$ error & $\AAE_{(1)}$ & $\ARE_{(1)}$ & $\AAE_{(2)}$ & $\ARE_{(2)}$ \\
\midrule
20 & 3.11957e-09 & 2.81507 & 11.9282 & 5.29885 & 0.190209 \\
40 & 5.23407e-10 & 1.00759 & 3.23288 & 1.57162 & 0.0630167 \\
60 & 1.65443e-10 & 0.451035 & 3.10148 & 1.01331 & 0.0370747 \\
80 & 7.21744e-11 & 0.274018 & 1.91442 & 0.575191 & 0.0220953 \\
100 & 4.05356e-11 & 0.196393 & 1.02488 & 0.277925 & 0.0127912 \\
120 & 2.93889e-11 & 0.152986 & 0.473851 & 0.17448 & 0.00880548 \\
\bottomrule
\end{tabular}
\end{table}
% ------------------------------------------------------------------------------ end: tables/tab1.tex

\begin{figure}[h]
\centering
\includegraphics[width=\textwidth,trim={22mm 0 29mm 0},clip]{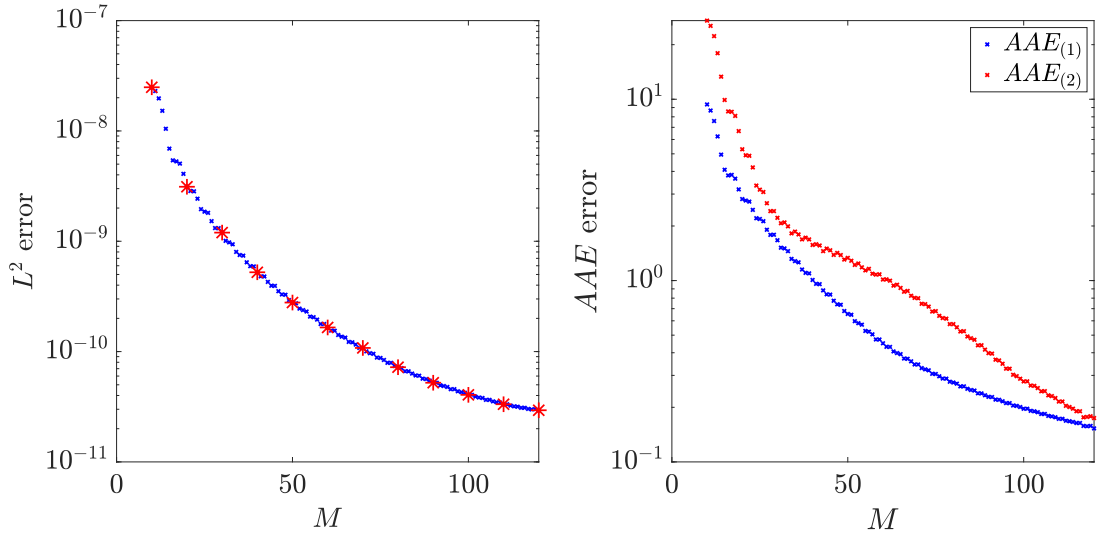}
\caption{Convergence of the $L^2$~error on the left. Red asterisks indicate values that are listed in the second column of Table~\ref{tab:BS_L2errors}. On the right we can see a convergence of the average absolute error calculated for two different sets of nodes. The scale at the vertical axis is logarithmic in both pictures.}
\label{fig:BS_convergence}
\end{figure}

We measure the following errors. First and foremost we compute the $L^2$~error using the Gauss-Hermite quadrature with 251 Hermite points. In the second column of Table~\ref{tab:BS_L2errors} we list the corresponding $L^2$~error for different polynomial orders. Convergence of the $L^2$~error is visually depicted in Figure~\ref{fig:BS_convergence} on the left. The set of nodes used on the right of Figure~\ref{fig:BS_convergence} and in Table~\ref{tab:BS_L2errors} will be introduced below.

Next we measure the point-wise absolute and relative errors at selected nodes and their average. In particular, by $\AE(\gamma)$ we denote the absolute error with respect to the Black-Scholes formula \eqref{BS_formula} at the point $S=\gamma K$,
\[ \AE(\gamma) = |u_M(\ln(\gamma K),T)) -  u^{\text{BS}}(\ln(\gamma K),T)|,  \]
where $\gamma>0$ is the moneyness introduced in Section \ref{sec23:models}. Similarly we measure the relative error
\[ \RE(\gamma) = \left|1-\frac{u_M(\ln(\gamma K),T))}{u^{\text{BS}}(\ln(\gamma K),T)}\right|. \]
In Table~\ref{tab:BS_AEREerrors}, we list the values of both absolute and relative error at three different moneyness nodes ($\gamma\in\{0.7, 1, 1.3\}$) for different polynomial orders. The relative error for $\gamma < 1$ is high, because the option price is close to zero. For small values of $M$, when the approximation is not optimal, the errors do not have to be strictly decreasing in $M$, which is expected.

For convenience, in Table~\ref{tab:BS_L2errors} we list also (arithmetic) averages of both $\AE$ (denoted $\AAE$) and $\RE$ (denoted $\ARE$) for two different sets of nodes:
\begin{enumerate}
\item $\gamma_i\in[0.7;1.3]$, $i=1,\dots,61$, taken with the equidistant step $\Delta\gamma = 0.01$,
\item $\gamma_i\in[1;1.5]$, $i=1,\dots,11$, taken with the equidistant step $\Delta\gamma = 0.05$.
\end{enumerate}
Convergence of $\AAE$ of both sets is depicted in Figure~\ref{fig:BS_convergence}.

\subsection{Heston model}\label{ssec:results:H}

We consider the following setting of Heston model. The parameters are chosen as in many examples in the book by \cite{Rouah13}
\begin{itemize}
\item initial variance $v_0 = 0.05$,
\item variance $v \in [0;0.5]$ discretized with the equidistant step $\Delta v = 0.005$,
\item mean reversion rate $\kappa = 5$,
\item long-run variance $\theta = 0.05$,
\item volatility of volatility $\tilde{\sigma} = 0.5$,
\item correlation $\rho = -0.8$,
\item the price of volatility risk $\lambda = 0$,
\item risk free interest rate $r= 0.03$,
\end{itemize}
and the parameters of the options are the same as for the BS model (Sec. \ref{options_param}). We also impose $x=\ln(S_t).$ Combinations of Hermite and Laguerre polynomials are chosen for the orthogonal polynomial expansion. Our numerical solution $u_{M,N}$ of the Heston PDE \eqref{H:PDE} is considered in the form \eqref{eq:solH}. Fourier coefficients for $\tau = T$ are obtained by solving the system of ODEs \eqref{H:system_odes} with $\tilde{B}$ given by \eqref{h:matrixB} and \eqref{H:Bab}. For consistency reasons, we use MATLAB ODE solver \texttt{ode45} to solve this system, that again leads to smaller values of $L^2$ error, although its speed is now much slower than the numerical calculation of matrix exponential in \eqref{H:expm} using MATLAB procedure \texttt{expm} \citep[Sec.~4.1]{Filipova19}.

In order to evaluate
\begin{align*}
u_{M,N} (x,v,T) &= \sum_{m=0}^{M} \sum_{n=0}^{N} c_{m,n}(T) H_m(x) L_n(v) %\\
= \sum_{n=0}^{N} L_n(v) \left( \sum_{m=0}^{M} c_{m,n}(T) H_m(x) \right),
\end{align*}
we repeatedly apply the Clenshaw's recurrence formula as indicated. % as in \eqref{e:clenshaw2}. 
To numerically evaluate the $L^2$ error, we make use of the Gauss-Hermite (with 251 Hermite points) and Gauss-Laguerre (with 201 Laguerre points) quadratures. For pointwise comparison we make use of the same set of nodes $(1)$ and $(2)$ as in Section~\ref{ssec:results:BS} with $v=v_0$.

In Figure \ref{fig:h_pde_ode45} on the left we can see the Heston-Lewis formula $u^{\text{H}}$ and numerical solution $u_{M,N}$ for $M = 35$ and $N = 30$ zoomed to the ITM region. This chosen combinations of polynomial orders present anticipated behaviour of the solution. The five dashed grid lines at the $xv$ plane are plotted at $\gamma\in\{1.1,1.2,1.3,1.4,1.5\}$. On the right we plot the relative error $|1-u_{M,N}/u^{\text{H}}|$. For the same polynomial orders we plot the absolute error $|u_{M,N}-u^{\text{H}}|$ and relative error $|1-|u_{M,N}/u^{\text{H}}$ for different values of $v$ in Figure \ref{fig:h_pde_ode45_aere}. 

\begin{figure}[ht]
\centering
\includegraphics[width=\textwidth,trim={15mm 0 25mm 0},clip]{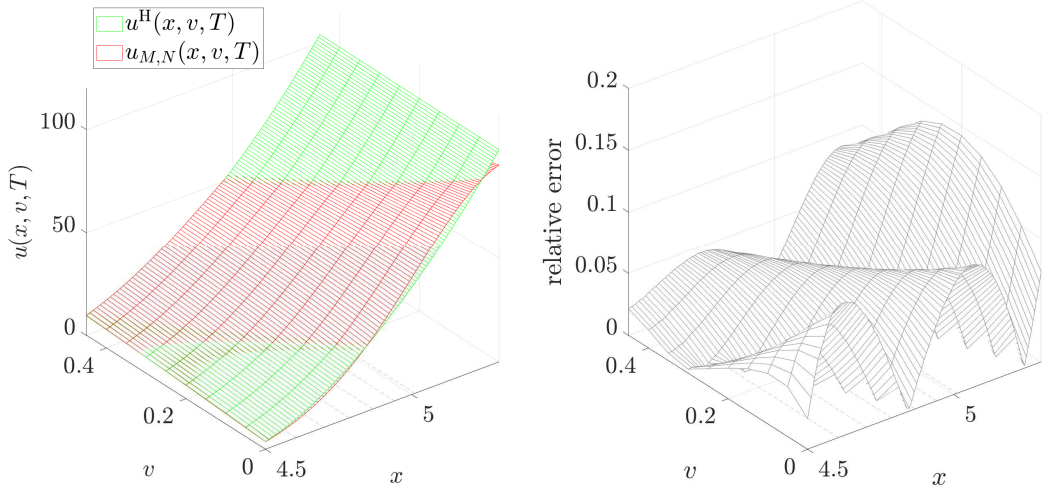}
\caption{Heston-Lewis formula $u^{\text{H}}(x,v,T)$ and the PDE solution $u_{M,N}(x,v,T)$ for $M=35$ and $N=30$ on the left, relative error $|1-u_{M,N}/u^{\text{H}}|$ on the right. The five dashed grid lines at the $xv$ plane are plotted at $\gamma\in\{1.1,1.2,1.3,1.4,1.5\}$. }
\label{fig:h_pde_ode45}
\end{figure} 

\begin{figure}[ht]
\centering
\includegraphics[width=\textwidth,trim={22mm 0 29mm 0},clip]{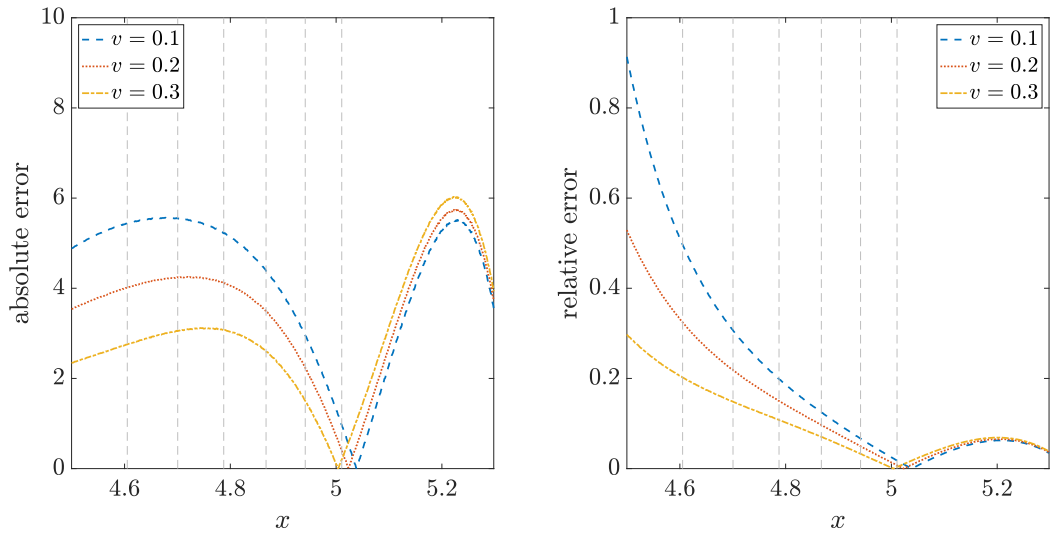}
\caption{Absolute and relative errors of the PDE solution $u_{M,N}(x,v,T)$ for $M=35$ and $N=30$ plotted for different values of $v$. The five dashed vertical grid lines are plotted at $\gamma\in\{1.1,1.2,1.3,1.4,1.5\}$.}
\label{fig:h_pde_ode45_aere}
\end{figure} 

Similarly as in the BS case, we measure the $L^2$ error, absolute error $\AE(\gamma)$ and relative error $\RE(\gamma)$ calculated at given point $x=\ln(\gamma K)$ and $v=v_0$, i.e. now 
\begin{align*}
\AE(\gamma) &= |u_{M,N}(\ln(\gamma K),v_0,T) - u^{\text{H}}(\ln(\gamma K),v_0,T)|, \\
\RE(\gamma) &= |1-u_{M,N}(\ln(\gamma K),v_0,T) / u^{\text{H}}(\ln(\gamma K),v_0,T)|,
\end{align*}
and also average errors $\AAE$ and $\ARE$ for the two set of nodes $(1)$ and $(2)$ as described above. All results are summarized in Tables \ref{tab:H_L2errors} and \ref{tab:H_AEREerrors}. Whereas the $L^2$ error is decreasing with increasing $M$, the influence of increasing $N$ is not significant that can be seen also in Figure \ref{fig:h_conv_pde} where we depicted the $\AAE$ for the ITM set of nodes $(2)$. From the numerical analysis point of view it is also interesting to mention that very few Laguerre points used in the numerical quadrature lie within the considered region $v\in[0;0.5]$ and experiments showed that the contribution from the majority of remaining points can be neglected.

\begin{figure}[ht]
\centering
\includegraphics[width=\textwidth,trim={22mm 0 29mm 0},clip]{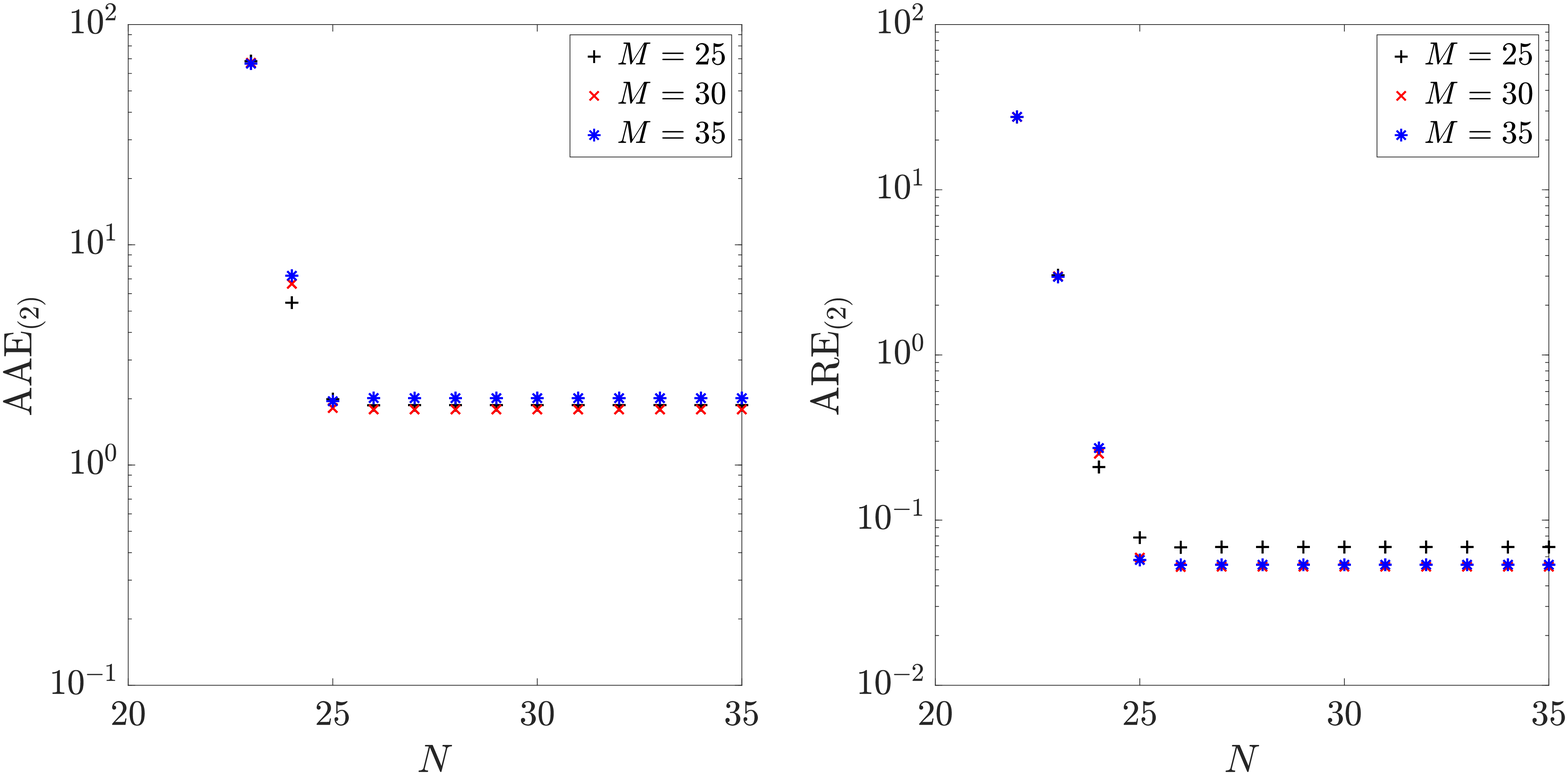}
\caption{Convergence of the average absolute error $\AAE$ and average relative error $\ARE$ for the set of nodes $(2)$.}
\label{fig:h_conv_pde}
\end{figure} 

% ------------------------------------------------------------------------------ begin: tables/tab3.tex
\begin{table}[ht]
\centering
\caption{Errors comparison for Heston model for different polynomial orders: $L^2$ error, average absolute error AAE and average relative error ARE are listed for two sets of points 1 and 2.}\label{tab:H_L2errors}
\begin{tabular}{lllllll}
\toprule
$M$ & $N$ & $L^2$ error & $\AAE_{(1)}$ & $\ARE_{(1)}$ & $\AAE_{(2)}$ & $\ARE_{(2)}$ \\
\midrule
25 & 26 & 0.238643 & 1.09888 & 0.390292 & 1.86799 & 0.0683568 \\
25 & 28 & 1.1757e-06 & 1.1014 & 0.382943 & 1.87171 & 0.0686957 \\
25 & 30 & 8.39452e-07 & 1.1014 & 0.382948 & 1.8717 & 0.0686954 \\
30 & 26 & 0.238643 & 0.547496 & 0.288746 & 1.78832 & 0.0520266 \\
30 & 28 & 1.18026e-06 & 0.547673 & 0.281325 & 1.7877 & 0.0522312 \\
30 & 30 & 8.44246e-07 & 0.547673 & 0.28133 & 1.7877 & 0.052231 \\
35 & 26 & 0.238642 & 0.493863 & 0.244035 & 2.01489 & 0.0535306 \\
35 & 28 & 1.17799e-06 & 0.491256 & 0.236509 & 2.01206 & 0.0536491 \\
35 & 30 & 8.42292e-07 & 0.491258 & 0.236514 & 2.01206 & 0.0536491 \\
\bottomrule
\end{tabular}
\end{table}
% ------------------------------------------------------------------------------ end: tables/tab3.tex

% ------------------------------------------------------------------------------ begin: tables/tab4.tex
\begin{table}[ht]
\centering
\caption{Errors comparison for Heston model for different polynomial orders: absolute error AE and relative error RE at several selected nodes are listed.}\label{tab:H_AEREerrors}
\begin{tabular}{llllllll}
\toprule
$M$ & $N$ & $\AE(0.7)$ & $\RE(0.7)$ & $\AE(1)$ & $\RE(1)$ & $\AE(1.3)$ & $\RE(1.3)$ \\
\midrule
25 & 26 & 0.511921 & 3.51694 & 0.92327 & 0.0909802 & 0.844315 & 0.024334 \\
25 & 28 & 0.497216 & 3.41591 & 0.936251 & 0.0922593 & 0.85614 & 0.0246748 \\
25 & 30 & 0.497225 & 3.41598 & 0.936242 & 0.0922585 & 0.856132 & 0.0246746 \\
30 & 26 & 0.386106 & 2.65258 & 0.539511 & 0.053164 & 0.939491 & 0.027077 \\
30 & 28 & 0.371401 & 2.55155 & 0.552491 & 0.0544432 & 0.927665 & 0.0267362 \\
30 & 30 & 0.37141 & 2.55162 & 0.552483 & 0.0544423 & 0.927673 & 0.0267364 \\
35 & 26 & 0.331769 & 2.27928 & 0.418323 & 0.0412221 & 1.90743 & 0.0549739 \\
35 & 28 & 0.317063 & 2.17825 & 0.431304 & 0.0425013 & 1.8956 & 0.0546331 \\
35 & 30 & 0.317073 & 2.17831 & 0.431296 & 0.0425004 & 1.89561 & 0.0546333 \\
\bottomrule
\end{tabular}
\end{table}
% ------------------------------------------------------------------------------ end: tables/tab4.tex

\FloatBarrier 

Following Section~\ref{sec:transport}, we now analyse the solution close to the boundary $v=0$. We consider $h=0.005$ and polynomial orders $M = 35$ and $N = 30$. In Figure~\ref{fig:h_near_v_0} on the left we can see the Heston--Lewis formula $u^{\text{H}}$ for $v = 0$, PDE solution $u_{M,N}$ at $v=0$ and the boundary solution $u^{\text{B}}$ gained by application of theory in Section~\ref{sec:transport} for $h=0.005$. The vertical dashed grid lines are plotted again at $\gamma\in\{0.7,1,1.3\}$. On the right we can see how the transport equation solution differs to the two remaining ones.

\begin{figure}[ht]
\centering
\includegraphics[width=\textwidth,trim={22mm 0 29mm 0},clip]{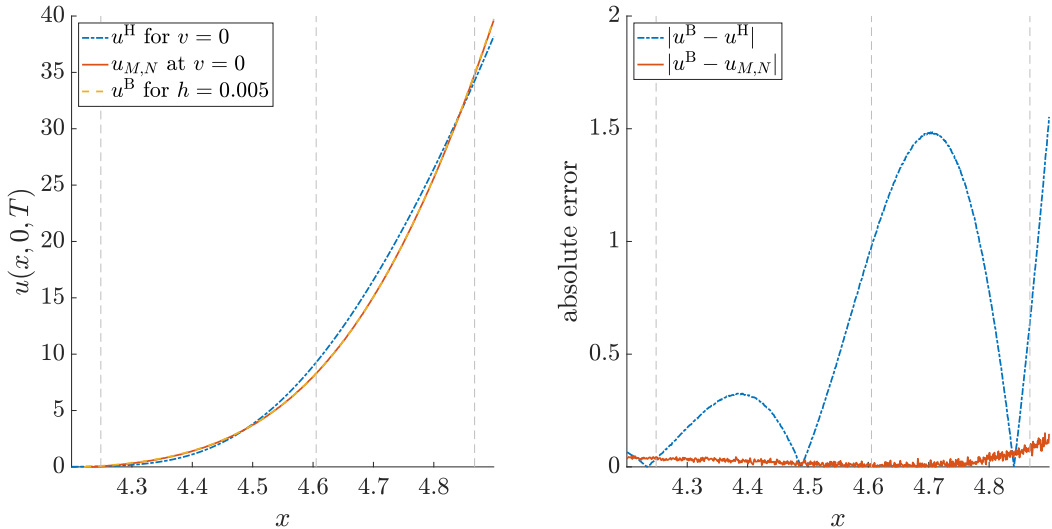}%
\caption{Comparison of the solution behaviour near the boundary $v = 0$ with zoom to the ATM region. The vertical grid lines are plotted at $\gamma\in\{0.7,1,1.3\}$.}
\label{fig:h_near_v_0}
\end{figure}

% ------------------------------------------------------------------------------ end: tables/tab4.tex
% ------------------------------------------------------------------------------ begin: conclusion.tex
\section{Conclusion}\label{sec:conclusion}

The analyticity of the solution of the Heston model has been shown in the recent paper of \cite{Alziary18}. A crucial step in their proof is the approximation of the payoff by a sequence of entire functions, in particular Hermite and Laguerre functions \citep[sec. 11.1]{Alziary18}, with the Galerkin method \citep[sec. 11.2]{Alziary18}. The aim of our paper was to make use of these theoretical results to study an alternative method for the option pricing problem for the \cite{BlackScholes73} model and the \cite{Heston93} model. Moreover, we were interested in the behaviour of the solution near the zero volatility boundary and considered the equation for vanishing volatility. This approach was also motivated and theoretically justified by results of \cite{Alziary20pre}.

By the numerical implementation of Galerkin's method in weighted Sobolev spaces we found an alternative representation of the solution to both BS and Heston models. The obtained representation is a smooth approximation of the solution that does not share the serious numerical difficulties of existing semi-closed formulas as they were presented by \cite{DanekPospisil20ijcm}. Presented approach is also independent of the space variable discretization (used for example by Galerkin finite element methods) or spacial node locations (needed by radial basis function methods).

The presented experiments give a first insight into the performance of the method but thorough numerical analysis has to be performed in order to properly understand the behaviour of the solutions for higher polynomial orders. There are different possibilities how one could try to improve the method, for example to use other procedures to solve the system of ordinary differential equations especially such that take into consideration the specific triangular form of the matrix. A detailed error analysis and the application of additional procedures were beyond the scope of this paper and is left as an open issue.

A considerable advantage of the presented approach is that it can be easily adapted to other stochastic volatility models by following the steps at the beginning of Section~\ref{sec:HestonPDE} and using the calculations of Theorem~\ref{t:int}. Aside from that, different payoff functions can be used as long as they are in the weighted Lebesgue space which applies to most of the generally used payoffs.

% ------------------------------------------------------------------------------ end: conclusion.tex
% ------------------------------------------------------------------------------ begin: funding.tex
\section*{Funding}

The work was partially supported by the Czech Science Foundation (GA\v{C}R) grant no. GA18-16680S ``Rough models of fractional stochastic volatility''.

% ------------------------------------------------------------------------------ end: funding.tex
% ------------------------------------------------------------------------------ begin: acknowledgement.tex
\section*{Acknowledgements}

This work is based on the Master's thesis \cite{Filipova19} titled \emph{Solution of option pricing equations using orthogonal polynomial expansion} that was written by Kate\v{r}ina Filipov\'{a} and supervised by Jan Posp\'{\i}\v{s}il. The thesis was also advised by Falko Baustian during the two months internship of Kate\v{r}ina Filipov\'{a} at the University of Rostock.

Our sincere gratitude goes to Prof. Peter Tak\'{a}\v{c} from the University of Rostock, who introduced us to the problem and provided us with valuable suggestions and insightful criticism, and to both anonymous referees for their valuable comments and extensive suggestions.

Computational resources were provided by the CESNET LM2015042 and the CERIT Scientific Cloud LM2015085, provided under the programme ``Projects of Large Research, Development, and Innovations Infrastructures''.

% ------------------------------------------------------------------------------ end: acknowledgement.tex

% ------------------------------------------------------------------------------ begin: references_preprint.tex

% ------------------------------------------------------------------------------ end: references_preprint.tex

\end{document}